\DeclareSymbolFont{largesymbolsA}{U}{txexa}{m}{n}
\DeclareMathSymbol{\varprod}{\mathop}{largesymbolsA}{16}
\DeclareMathAlphabet\mathbfcal{OMS}{cmsy}{b}{n}
\newcommand{\itpaths}[1]{\G^{#1\rightarrow t}}
\newcommand{\itpathsless}[2]{\G^{#1\rightarrow t}_{#2}}
\newcommand{\Metric}{\mathcal{M}}
\newcommand{\G}{\mathcal{G}}
\newcommand{\GG}{\mathbb{G}}
\newcommand{\VV}{\mathbb{V}}
\newcommand{\EE}{\mathbb{E}}
\newcommand{\R}{\mathbb{R}}
\newcommand{\U}{\mathcal{U}}
\renewcommand{\P}{\mathcal{P}}
\newcommand{\UU}{\mathcal{U}}
\newcommand{\NP}{\mathcal{NP}}
\newcommand{\grandO}{\mathcal{O}}
\newcommand{\uu}{u}
\newcommand{\ddet}{d}
\newcommand{\drob}{d}
\newcommand{\nU}[1]{{\sigma_{#1}}}
\newcommand{\NU}{\sigma}
\newcommand{\dmax}{d^{max}}
\newcommand{\costmax}{\cost_{d^{max}}}
\newcommand{\costmaxM}[1]{\cost^{max}}
\newcommand{\cost}{c}
\newcommand{\costtilde}{\tilde{c}}
\newcommand{\Ex}{\mathbb{E}}
\newcommand{\abs}[1]{\left\lvert #1\right\rvert}
\newcommand{\floor}[1]{\left\lfloor#1\right\rfloor}
\newcommand{\distany}[2]{d\left(#1,#2\right)}
\newcommand{\dist}[2]{\distany{#1}{#2}}
\newcommand{\distbisMM}[1]{\tilde{d}}
\DeclareMathOperator*{\diam}{diam}
\DeclareMathOperator*{\ext}{ext}
\newtheorem{observation}{Observation}
\newcommand{\revision}[1]{{#1}}
\newcommand{\SP}{\textsc{gen-sp}\xspace}
\newcommand{\MINEWCP}{\textsc{min-ewcp}\xspace}
\newcommand{\DELTASP}{{\textsc{sp}}\xspace}
\newcommand{\MSF}{{\textsc{min-ssf}}\xspace}
\newcommand{\DELTAMST}{{\textsc{mst}}\xspace}
\newcommand{\DELTATSP}{{\textsc{tsp}}\xspace}
\newcommand{\PBFAMILY}{\ensuremath{\mathcal{S}}\xspace}
\newcommand{\ROBUST}[1]{\ensuremath{\textsc{robust}\mbox{-#1}}\xspace}
\newcommand{\additionalInput}{\alpha}
\newcommand{\OPT}{\ensuremath{\textsc{opt}}\xspace}
\newcommand{\NPH}{$\cal{NP}$-hard\xspace}
\newcommand{\PTAS}{$\cal{PTAS}$\xspace}
\newcommand{\FPTAS}{$\cal{FPTAS}$\xspace}
\DeclareMathOperator*{\Val}{Val}
\newcommand{\nval}{n_{val}}
\newcommand{\nprof}{n_{\P}}
\renewcommand{\P}{\mathcal{P}} %ens de profils
\DeclareMathOperator*{\argmin}{arg\,min}
\DeclareMathOperator*{\argmax}{arg\,max}
\newtheorem{theorem}{Theorem}
\newtheorem{corollary}{Corollary}
\newtheorem{example}{Example}
\newtheorem{lemma}{Lemma}
\newtheorem{proposition}{Proposition}
\newtheorem{remark}{Remark}
\pgfplotsset{compat=1.18}
\author[Bougeret et al.]{Marin Bougeret\affiliationmark{1}
  \and J\'er\'emy Omer\affiliationmark{2}
  \and  Michael Poss\affiliationmark{1}}
\title[Optimization with locational uncertainty]{Approximating optimization problems in graphs with locational uncertainty}
\affiliation{
  % one line per affiliation, no postal codes, grant numbers or similar
  LIRMM, University of Montpellier, CNRS, Montpellier, France\\
  IRMAR, INSA Rennes, Rennes, France}
\keywords{robust optimization, approximation algorithms, dynamic programming}
\begin{document}
%%%%%%%%%%%%%%%%

% \publicationdata
% {vol. 25:3 special issue for main purpose}
% {2022}
% {1}
% {10.46298/dmtcs.10472}
% %{1998-10-14; 1998-10-14; 2002-07-19; 2014-02-05; 2015-09-09; 2022-12-25}
% %{2022-12-3}
% {2022-12-3; None}
% {2023-1-1}
\maketitle
% Here is the abstract:

\begin{abstract}
  Many combinatorial optimization problems can be formulated as the search for a subgraph that satisfies certain properties and minimizes the total weight. We assume here that the vertices correspond to points in a metric space and can take any position in given uncertainty sets. Then, the cost function to be minimized is the sum of the distances for the worst positions of the vertices in their uncertainty sets. We propose two types of polynomial-time approximation algorithms. The first one relies on solving a deterministic counterpart of the problem where the uncertain distances are replaced with maximum pairwise  distances. We study in details the resulting approximation ratio, which depends on the structure of the feasible subgraphs and whether the metric space is Ptolemaic or not. The second algorithm is a fully-polynomial time approximation scheme for the special case of $s-t$ paths.
\end{abstract}

\section{Introduction}

Given a graph $\GG=(\VV,\EE)$ and a family of feasible subgraphs $\G$ of $\GG$, many discrete optimization problems amount to find the cheapest subgraph in $\G$. In this paper, we assume that $\GG$ is a graph embedded into a given metric space $(\Metric,d)$ so that every vertex $i\in\VV$ is associated to a point $u_i\in \Metric$ and the weight of edge $\{i,j\}\in \EE$ is equal to the distance $d(u_i,u_j)$. Then, the cost of a subgraph $G\in\G$ is defined as the sum of the weights of its edges. Denoting by $E[G]$ the set of edges of $G$, the cost of graph $G\in\G$ is given by 
$$
c_d(\uu,G)=\sum_{\{i,j\}\in E[G]} d(u_i,u_j),
$$
leading to the combinatorial optimization problem
\begin{equation}
\tag{$\Pi$}
\label{eq:CO}
 \min_{G\in \G}\sum_{\{i,j\}\in E[G]} d(u_i,u_j).
\end{equation}
\revision{The subscript $\drob$ on $c_d$ will be removed when clear from the context, denoting the cost by $c(u,G)$.}

An example of problem~\ref{eq:CO} arises in data clustering, where one wishes to partition a given set of data points $\{u_1,\ldots,u_n\}\subset\Metric$ into at most $K$ sets so as to minimize the sum of all dissimilarities between points that belong to the same element of the partition. In this example, $\GG$ is a complete graph, every vertex $i$ of $\VV$ represents one of the data point $u_i$, and $\G$ consists of all disjoint unions of cliques covering $\GG$. Furthermore, the distance $d(u_i,u_j)$ between any two vertices $i,j\in \VV$ measures the dissimilarity between the corresponding data points. This dissimilarity may involve Euclidean distances, for instance when comparing numerical values, as well as more ad-hoc distances when comparing ordinal values instead. Problem~\ref{eq:CO} encompasses other applications, such as subway network design or facility location. In the the network design problem, one wishes to find a cheapest feasible subgraph of $\GG$ connecting a certain number of vertices. Thus, $\G$ consists of Steiner trees covering a given set of terminals and the distance used is typically Euclidean, see~\cite{GUTIERREZJARPA20133000}. In facility location problems, set $\G$ consists of unions of disjoint stars, the centers and the leaves of the stars respectively representing the facility and the clients. In that example, the distance typically involves shortest paths in an underlying road network represented by an auxiliary weighted graph $G_\Metric=(V_\Metric,E_\Metric)$, leading to a graph-induced metric space, see~\cite{MELKOTE2001515}.

More formally, we are interested in the class $\PBFAMILY$ of deterministic combinatorial optimization problems that can be formulated as~\ref{eq:CO}. 
%Given a metric space $(\Metric,d)$, 
Any $\Pi\in \PBFAMILY$ represents a specific problem, such as the shortest path or the minimum spanning tree.
An instance of problem $\Pi$ is defined by its input $I=(\GG,\additionalInput,\uu,\ddet)$, where $\GG$ is an undirected simple graph, $\additionalInput$ contains problem-specific additional inputs, and \revision{$\ddet$ is the distance matrix of the set $\bigcup_{i\in \VV}\{u_i\}$.} We denote by $\G(I)$ the set of all subgraphs (not necessarily induced) of $\GG$ that satisfy the constraints specific to $\Pi$ for input $I$. %\revision{and we denote by $\G(\Pi)=\bigcup_{I}\G(I)$ the family of graphs corresponding to problem $\Pi$.} 
For instance, for the Shortest Path problem (\DELTASP), the additional input $\additionalInput=\{s,t\}$ consists of the origin and destination vertices, while $\additionalInput=\emptyset$ %, so the objective is to find an $s-t$ path $P$ minimizing $c(\uu,P)$, \revision{and $\G(\DELTASP)$ consists of paths}. 
in the Minimum Spanning Tree problem (\DELTAMST). % , the objective is to find a spanning tree $T$ minimizing $c(\uu,T)$, \revision{and $\G(\MST)$ consists of trees}. 
Notice that considering the distance $d$ as a matrix and providing it explicitly in the input avoids (i) providing $(\Metric,d)$ as part of the input, which may not be straightforward in some cases (e.g., if $(\Metric,d)$ is a Riemannian manifold), (ii) discussing the complexity of computing the distances which again might involve solving difficult optimization problems.

It has been assumed so far that the position of the vertices are known with precision, which is often not a realistic assumption. In data clustering, uncertainty on the values is common, be it because of a measurement error, or because of a lack of information -- in which case the corresponding coordinate is often replaced by the full interval~(\cite{masson2020cautious}). Similar uncertainties arise in network design or facility location because the exact location of the stations and facilities must satisfy technical requirements as well as political considerations as local officials are never happy to let their citizens face the inconvenience of heavy civil engineering. In this paper, we address this issue through the lens of robust optimization, \revision{by introducing $\ROBUST{$\Pi$}$. An instance of \ROBUST{$\Pi$} is given by $I=(\GG,\additionalInput,\U,\drob)$ where $\GG$ and $\alpha$ are as before, $\drob$ is a distance matrix, $\U_i$ denotes the set of possible positions for node $i$, which correspond to indexes of rows/columns of the the distance matrix $\drob$, and $\UU=\times_{i\in \VV}\U_i$. Given this instance, the problem is to solve
\begin{equation}
\tag{\ROBUST{$\Pi$}}
\label{eq:minmaxCO}
\min_{G\in \G(I)} \max_{\uu\in\UU}\sum_{\{i,j\}\in E[G]} d(u_i,u_j).
\end{equation}
Thus, the objective of \ROBUST{$\Pi$} is to find $G\in\G(I)$ that minimizes $c(G) = \max_{\uu\in\U} c(\uu,G)$. Introducing the notation $\nU{i}=|\U_i|$, we see that \ROBUST{$\Pi$} is a generalization of $\Pi$ as it corresponds to the case where $\nU{i}=1$ for each $i$.

\begin{remark}
When $(\Metric,d)$ is the $q$-dimensional Euclidean space, a natural setting for locational uncertainty would be to model the uncertainty around $u_i$ by a convex set, such as a polytope $\P_i\subset\R^q$, with $\P=\times_{i\in \VV}\P_i$. Then, given the set of subgraphs $\G(I)$ of $\GG$, one could seek to solve the optimization problem
\begin{equation}
\label{eq:generalrobust}
\min_{G\in \G(I)} \max_{\uu\in\P}\sum_{\{i,j\}\in E[G]} d(u_i,u_j),
\end{equation}
where the adversary maximizes over polytope $\P$ instead of the finite set $\U$ used in~\eqref{eq:minmaxCO}. Nevertheless, due to the convexity of $d$, we have $$\max_{\uu\in\P}\sum_{\{i,j\}\in E[G]} d(u_i,u_j)= \max_{\uu\in\ext(\P)}\sum_{\{i,j\}\in E[G]} d(u_i,u_j),$$ where $\ext(\P)$ denotes the set of extreme points of $\P$. Therefore, problem~\eqref{eq:generalrobust} can be cast into our setting by letting $\drob$ be the distance matrix of the set $\bigcup_{i\in\VV}\ext(\P_i)$ and defining the sets $\U_i$ accordingly.
%For instance, in the data clustering example, one wishes to partition the points $p_1,\ldots,p_{|\VV|}\in\R^q$ so as to minimize the sum of the distances among the points assigned to the same set of the partition. In the context of location uncertainty, the exact values for the points are typically replaced by full intervals: each point $p_i$ can now take any value in the hyper-rectangle $\P_i=\set{p\in\R^{q}}{\underline p_i\leq p\leq \overline p_i}$ where $\underline p,\overline p\in\R^{q}$ are known bounds. More generally, let $\P_i\subset\Metric$ be the polytope of the possible locations for node $i\in\VV$, and define . 
\end{remark}
}

Problem~\ref{eq:minmaxCO} is closely related to min-max robust combinatorial optimization pioneered in~\cite{kouvelis2013robust}. In that framework, one searches for the best solution to a combinatorial optimization problem given that the adversary chooses the worst possible cost vector in a given uncertainty set, see the surveys by~\cite{AissiBV09,BuchheimK18,kasperski2016robust}. The complexity and approximation ratios available for the resulting problems typically depend on the underlying combinatorial optimization problem (herein represented by~\ref{eq:CO}) and the structure of the uncertainty set. While some algorithms have been proposed to handle arbitrary finite uncertainty sets (e.g.~\cite{ChasseinGKZ20,KasperskiZ13}), simpler sets typically benefit from stronger results, such as axis-parallel ellipsoids~(\cite{BaumannBI14,Nikolova2010}) or budget uncertainty sets~(\cite{BertsimasS03,BertsimasS04}). The latter results have been extended to more general polytopes in~\cite{OJMO_2024__5__A5_0} and to problems featuring integer decision variables in~\cite{GoetzmannST11}. Specific combinatorial optimization problems, such as the shortest path or the spanning tree, have also benefited from stronger results, e.g.~\cite{AissiBV05,YamanKP01}.

An important specificity of problem~\ref{eq:minmaxCO} is that its cost function is in general non-concave, hardening the inner maximization problem. 
In fact, we have shown in our companion paper~(\cite{BougeretOmerPoss1}) that computing the cost function $c(G)$ of a given subgraph $G\in \G(I)$ is in general $\NP$-hard. 
We have further shown in that paper that problem~\ref{eq:minmaxCO} is $\NP$-hard for the shortest path and the minimum spanning tree. %, even for simplistic uncertainty sets $\U_i$, \revision{namely the 1-dimensional Euclidean space for shortest path, and a metric space induced by a weighted graph for minimum spanning tree}.  
That work also proposes exact and approximate solution algorithms based on mixed-integer formulations. 
Another paper addressing~\ref{eq:minmaxCO} is~\cite{citovsky2017tsp}, which relies on computational geometry techniques to provide constant-factor approximation algorithms in the special case where \ref{eq:CO} is the traveling salesman problem, meaning that $\G(I)$ consists of all Hamiltonian cycles of $\GG$. The main result of that paper is a polynomial-time approximation scheme for the special case in which each $\U_i$ is a set of disjoint unit disks in the plane. They also propose an approximation algorithm that amounts to solve a deterministic counterpart of~\ref{eq:minmaxCO} where the uncertain distances are replaced by the maximum pairwise distances 
$$
\dmax_{ij}=\max_{u_i\in \U_i, u_j\in \U_j} \dist{u_i}{u_j},
$$
for each $(i,j)\in \VV^2, i\neq j$.

In Section~\ref{sec:approx}, we deal with the main purpose of this paper, which is to extend the approximation algorithm based on $\dmax$ and suggested by~\cite{citovsky2017tsp} to general sets $\G(I)$ and distances matrices more general than those induced by embeddings into Euclidean spaces. 
First, Theorem~\ref{thm:approx} transfers any approximation ratio known for problem $\Pi\in\PBFAMILY$ to \ROBUST{$\Pi$}. We prove that, in general, the transfer involves a multiplicative constant equal to 9 (Theorem~\ref{thm:constant_ratio_non_pto}). 
Then, we dig into smaller multiplicative constants, assuming that $\G(I)$ and/or $d$ satisfy additional assumptions. Regarding $\G(I)$, we obtain smaller constants for special families of graphs such as cycles, stars, trees, and graphs that can be composed as disjoint unions of these graphs. As a special case of our results, we find the constant of 3 for Hamiltonian cycles previously provided by~\cite{citovsky2017tsp}. Concerning the structure of $d$, we show that distances that satisfy Ptolemy's inequality (see~\cite{apostolPtolemyInequalityChordal1967}) benefit from stronger results.

We complement these results by Section~\ref{sec:FPTAS} where we provide a dynamic programming algorithm for the special case where \ref{eq:CO} is the Shortest Path problem, so $\G(I)$ consists of all $s-t$ paths. The algorithm is then turned into a fully-polynomial time approximation scheme by rounding the input appropriately.

\paragraph{Additional notations and definitions}

Given a simple undirected graph $G$, $V[G]$ denotes its set of vertices.
When clear from context we use notations $n=|V[G]|$ and $m=|E[G]|$, where $|S|$ denotes the cardinality of any finite set $S$.
For any $i \in V[G]$, we denote by $N(i)=\{j \in V[G] \mid \{i,j\} \in E[G] \}$ the neighborhood of $i$. We say that a graph $G$ is a clique if for any two disjoint vertices $i,j$,$\{i,j\} \in E$, and we say that \revision{$G$ is a star if $G$ is a complete bipartite
graph $K_{1,k}$ for some $k \ge 1$}. For any positive integer $k$, we denote $[k]=\{1,\dots,k\}$. The diameter of a subet $S$ of the indices of $d$ is given by $\max_{u,v\in S}d(u,v)$. \revision{Remember also that given a problem \ROBUST{$\Pi$}, $\GG$ denotes the graph in the input $I$,  $\G(I)$ is the set of all subgraphs (not necessarily induced) $G$ of $\GG$ that satisfy the constraints specific to $\Pi$ for input $I$.}% and $\G(\Pi)$ the class of graphs considered by problem $\Pi$.}
%%%%%%%%%%%%%%%%%%%%%%%%%%%%%%%%%%%%%%%%%%%%%%%%%%%%%%%%%%%%%%%%%%%%%%%%%%%%%%%%%%%%%%%%%%%%%%%%%%%%%%%%%%%%
\section{Approximation of the general robust problem}
\label{sec:approx}
%%%%%%%%%%%%%%%%%%%%%%%%%%%%%%%%%%%%%%%%%%%%%%%%%%%%%%%%%%%%%%%%%%%%%%%%%%%%%%%%%%%%%%%%%%%%%%%%%%%%%%%%%%%%

\subsection{Reduction to a deterministic problem by using worst-case distances}
\label{sec:wcd}

\begin{algorithm}[!ht]
\DontPrintSemicolon
\SetKwInOut{Return}{return}
\SetKwInOut{Initialization}{initialization}
%Compute $\dmax_{ij}=\max\limits_{u_i\in \U_i, u_j\in \U_j} \|u_i-u_j\|_2$ for each $\{i,j\}\in E$\;
\revision{
Given an instance $I=(\GG,\additionalInput,\U,\drob)$ of \ROBUST{$\Pi$}\;
Select $u' \in \UU$ to define instance $I'=(\GG,\additionalInput,u',\drob')$ of $\Pi$, where $\drob'$ is the submatrix of $\drob$ corresponding to $u'$\;
Compute $G$ using an approximation algorithm for $I'$\;
\textbf{return} $G$
%Given an input $I$ of \ROBUST{$\Pi$} with uncertainty set $\UU$\;
%Given an input $I=(\Metric,d,G,\additionalInput,\UU)$ of \ROBUST{$\Pi$}\;
%Select $\uu^*\in\UU$\;
%Define $w'$ with $w'_{ij}=\dist{u^*_i}{u^*_j}$ for any edge $\{i,j\}$ (observe that $(G,w')$ satisfies triangle inequality)\;
%Define $I'$ an instance of $\Pi$ by replacing $\UU$ by $\uu^*$\;
%Define $I'=(\Metric,d,G,\additionalInput,\uu^*)$ of $\Pi$\;
%Compute $G$ using an approximation algorithm for $\Pi$ on instance $I'$\;
%\textbf{return} $G$
}
\caption{Solving a deterministic counterpart based on some representative location~$\uu'$}
\label{algo:approx1}
\end{algorithm}

A natural approach to \ROBUST{$\Pi$} would be to choose a relevant vector $\uu'\in\U$, reduce to the corresponding deterministic problem $\Pi$ given by
$
\min_{G\in \G(I)}\sum_{\{i,j\}\in E[G]} \dist{u'_i}{u'_j},
$
and use any known approximation algorithm for $\Pi$. This approach is formalized by Algorithm~\ref{algo:approx1}. 

Unfortunately, choosing such a representative $\uu'$ is not easy. For instance, a natural choice might be to consider the geometric median of each set, i.e., $\uu'=\uu^{gm}$ where 
$
 u_i^{gm} \in \argmin_{u_1\in \U_i} \sum_{u_2\in\U_i}\dist{u_1}{u_2}.
$
Although the choice of geometric median may appear natural at first glance, the cost of the solution $G^{gm}$ returned by Algorithm~\ref{algo:approx1} for $\uu'=\uu^{gm}$ may actually be arbitrarily larger than the optimal solution cost.
\begin{observation}
  Let $\Pi\in\PBFAMILY$ such that \revision{any $G\in\G(I)$ is a single edge 
  for any instance $I$ of $\Pi$}. Let $e^{gm}$ be the solution returned by 
  Algorithm~\ref{algo:approx1} for $\uu'=\uu^{gm}$. The ratio $\cost(e^{gm})/\OPT$ is unbounded.
\end{observation}
\begin{proof}
 For any $\epsilon>0$ small enough, consider $\VV=\{1,2,3\}$ and $\EE=\{\{1,2\},\{2,3\}\}$. We consider an embedding into the $1$-dimensional Euclidean space with $\U_1=\{\epsilon\}$, $\U_2=\{0\}$, and $\U_3=\{-1,0,1\}$, and defining $d$ as the resulting distance matrix. We have that $u_1^{gm}=\epsilon$ and $u_2^{gm} = u_3^{gm} = 0$ so Algorithm~\ref{algo:approx1} picks edge $\{2,3\}$ having a cost of $\cost(\{2,3\})= 1$. In contrast edge $\{1,2\}$ has a cost of $\epsilon$, yielding a ratio of $1/\epsilon$.
\end{proof}

We proceed by using a different approach for reducing to the deterministic problem~$\Pi{}$. 
Given an instance $I=(\GG,\alpha,\U,\drob)$ of \ROBUST{$\Pi$}, we construct an instance $I'=(\GG,\alpha,v,\dmax)$ of $\Pi$ by defining $\dmax_{ij}=\max_{u_i\in \U_i, u_j\in \U_j} \dist{u_i}{u_j}$ and $v=(1,\ldots,|\VV|)$. Notice that $\dmax$ is indeed a distance matrix as $\dmax_{ij}=0$ iff $i=j$ and it satisfies the triangular inequality. \revision{This leads to Algorithm~\ref{algo:approx}.}

\begin{algorithm}[!ht]
\DontPrintSemicolon
\SetKwInOut{Return}{return}
\SetKwInOut{Initialization}{initialization}
%Compute $\dmax_{ij}=\max\limits_{u_i\in \U_i, u_j\in \U_j} \|u_i-u_j\|_2$ for each $\{i,j\}\in \EE$\;
\revision{
Given an instance $I=(\GG,\additionalInput,\U,\drob)$ of \ROBUST{$\Pi$}\;
Construct $\dmax$ to define instance $I'=(\GG,\additionalInput,v,\dmax)$ of $\Pi$\;
Compute $G$ using an approximation algorithm for $I'$\;
\textbf{return} $G$
%Given an input $(\Metric,d,G,\additionalInput,\U)$ of \ROBUST{$\Pi$}\;
%Define $\Metric'=\{u'_i, i \in \VV[G]\}$, and for any vertices $i \neq j$, $\distbis{x_i}{x_j}=\dmax_{ij}$\;
%Define $I'=(\Metric,d,G,\additionalInput,\uu')$ of $\Pi$\;
%Compute $G$ using an approximation algorithm for $\Pi$ on instance $I'$\;
%\textbf{return} $G$
}
\caption{Solving a deterministic counterpart based on $\dmax$ distances}
\label{algo:approx}
\end{algorithm}

\revision{
\begin{theorem}
  \label{thm:approx}
  Let $\Pi \in\PBFAMILY$ and assume that:
  \begin{itemize}
    \item $\Pi$ has a $\rho_1$-polynomial time approximation algorithm, and
    \item there exists $\rho_2 \ge 1$ such that for each input $I=(\GG,\additionalInput,\U,\drob)$ of \ROBUST{$\Pi$}, $\cost_{\dmax}(G)\leq \rho_2 \cost_\drob(G)$ for any $G\in\G(I)$.
  \end{itemize}
Then, Algorithm~\ref{algo:approx} can be used to derive a polynomial $\rho_1\rho_2$-approximation for \ROBUST{$\Pi$}.
%old version
%Let $\Pi\in\PBFAMILY$ where the output $G$ belongs to a graph class $\G(I)$.
%  Suppose that there is a polynomial $\rho_1$-approximation for $\Pi$, and that $\costmax(G)\leq \rho_2 \cost(G)$ for any $G\in\G(I)$.
%  Then, using this $\rho_1$-approximation in Algorithm~\ref{algo:approx} yields a polynomial $\rho_1\rho_2$-approximation for \ROBUST{$\Pi$}.

%  and let $\Fmax$ be an $\epsilon$-approximate (with $\epsilon\geq 1$) solution to problem~\eqref{eq:dmaxCObis}. Then, $\Fmax$ is an $\epsilon \cdot \rho$-approximate solution to problem~\eqref{eq:minmaxCO}.
\end{theorem}

%$revoir \costmax_{Metric qui fait apparaitre M pas ou il faut, et add usage de Mtilde dans preuve thm 1, et finir réponse à ça, et ajouter bleu,
%  et parcourir partout pour savoir si on met _M ou pas

\begin{proof}
  Let $G^*$ be an optimal solution of instance $(\GG,\alpha,\U,\ddet)$ of \ROBUST{$\Pi$}, and $G^{max}$ be an optimal solution to instance $I'=(\GG,\alpha,v,\dmax)$ of $\Pi$. Furthermore, following the first assumption, there is a $\rho_1$-approximation algorithm for $\Pi$, which we use to construct a solution $G$ to $I'$ such that $\costmax(G) \le \rho_1 \costmax(G^{max})$.   
  We have $$\cost_d(G) \le \cost_{\dmax}(G) \le \rho_1 \cost_{\dmax}(G^{max}) \le \rho_1\cost_{\dmax}(G^*) \le \rho_1\rho_2 \cost_d(G^*),$$ where the last inequality follows from the second assumption of the theorem.
\end{proof}

}

\begin{table}[!h]
%\begin{center}
\footnotesize
  \begin{tabular}{p{0.22\textwidth}||p{0.20\textwidth}|p{0.25\textwidth}|p{0.26\textwidth}}
    \hline
Problem (each connected component belongs to graph family $\G(I)$) & deterministic \newline version & $\cost(G)/\costmax(G)$ & robust counterpart \\
\hline
\hline
$\Pi \in \PBFAMILY$ (arbitrary) & $\rho$-approx. & $9$ (Thm~\ref{thm:constant_ratio_non_pto}) \newline $4$ (Ptolemaic, Thm~\ref{thm:constant_ratio_general})  & $9\rho$-approx. \newline $4\rho$-approx. (Ptolemaic)  \\
\hline
\MINEWCP (clique) & $2$-approx~(\cite{eremin20142}) & $2$ (Prop~\ref{prop:cliques}) & $4$-approx. \\
\hline
\DELTASP (path) & polynomial & $2$ (Cor~\ref{cor:ratio_paths_cycles}) &  $2$ approx \newline \NPH (\cite{BougeretOmerPoss1}) \newline  \FPTAS (Thm~\ref{thm:fptas}) \\
\hline
\DELTAMST (tree) & polynomial & $6$ (Prop~\ref{prop:tree}) \newline $4$ (Ptolemaic, Thm~\ref{thm:constant_ratio_general}) \newline $2\sqrt 2$ (planar Eucl., Prop~\ref{prop:tree_eucl}) & $6$-approx. \newline $4$-approx. (Ptolemaic) \newline $2\sqrt 2$-approx. (planar Eucl.) \newline \NPH (\cite{BougeretOmerPoss1})\\
\hline
\MSF (star) & polynomial~(\cite{KHOSHKHAH20191}) & 3 (Prop~\ref{prop:stars}) \newline 2 (Ptolemaic, Cor~\ref{cor:star}) & 3-approx \newline 2-approx \\
\hline
\DELTATSP (cycle) & $\frac{3}{2}$-approx~(\cite{christofides1976worst}) & $2$ (\cite{citovsky2017tsp}) and~Cor~(\ref{cor:ratio_paths_cycles}) & $3$-approx~(\cite{citovsky2017tsp}) \newline \PTAS (planar Eucl.)~(\cite{citovsky2017tsp}) \\
\hline
  \end{tabular}
\bigskip
\caption{Overview of our results. When no reference is given, the ratios in the ``robust counterpart'' column are obtained by Theorem~\ref{thm:approx}. All our results are for finite $\U_i$, except the $2 \sqrt 2$ ratio for trees, which holds for balls in the Euclidean plane, and the \PTAS for \DELTATSP holding for disjoint unit balls in the Euclidean plane.}
\label{fig:array}
\label{tab:overview}
\end{table}

Let us recall Ptolemy's inequality (see e.g. \cite{apostolPtolemyInequalityChordal1967}).
A distance is \emph{Ptolemaic} if for any four indexes $A, B, C, D$ of $d$,
%\begin{definition}[Ptolemaic space]\
%\label{prop:ptolemy}
\begin{equation}
\label{eq:PtolemyIneq}
 \dist{A}{C}\cdot \dist{B}{D} \leq \dist{A}{B}\cdot \dist{C}{D} + \dist{B}{C}\cdot \dist{A}{D}.
\end{equation}
In Section~\ref{sec:generalgraphs} and Section~\ref{sec:specificgraphs} we prove the existence of constant upper bounds on the ratio $\costmax(G)/\cost(G)$ for different families~$\G(I)$ and distances. A summary of our results is given in Table~\ref{tab:overview}. 
In particular, for any graph $G$, Theorem~\ref{thm:constant_ratio_general} states that $\costmax(G)\leq 4 \cost(G)$ for any Ptolemaic distance, and Theorem~\ref{thm:constant_ratio_non_pto} states that $\costmax(G)\leq 9 \cost(G)$ for any distance.
These imply that, up to a constant factor, \ROBUST{$\Pi$} is not harder to approximate than $\Pi$, as formalized below.
\revision{
\begin{theorem}
  \label{thm:approx_bis}
Let $\Pi \in\PBFAMILY$ be a problem that is $\rho_1$-approximable in polynomial time and $I=(\GG,\additionalInput,\U,\drob)$ be an instance of \ROBUST{$\Pi$}. Then, Algorithm~\ref{algo:approx} yields a polynomial $9\rho_1$-approximation for \ROBUST{$\Pi$}, and even a $4\rho_1$-approximation if $d$ is Ptolemaic.
\end{theorem}
}

\subsection{Bounding the approximation ratio on general graphs}

\label{sec:generalgraphs}

\revision{We divide our study of $\costmax(G)/\cost(G)$ for arbitrary graphs $G$ into two cases. First, we consider that the distances matrices satisfy the Ptolemy's inequality recalled in~\eqref{eq:PtolemyIneq}. Then, we consider arbitrary distances matrices.}

\subsubsection{Ptolemaic distances}
\label{sec:ptolemaic}
%\end{definition}

\revision{We consider throughout the section that $d$ is Ptolemaic, which includes, for instance, distances resulting from embeddings into Euclidean spaces.} A direct consequence of the definition is given by the following lemma.
\begin{lemma}\label{lem:triangle}
Let $d$ be a Ptolemaic distance matrix and let $A,B$ and $C$ be such that $$\dist{B}{C}\geq \max\{\dist{A}{B},\dist{A}{C}\}.$$ Then, for any other $O$,
\begin{equation}
\label{eq:ptolemaictriangle}
d(O,A) \leq d(O,B) + d(O,C).
\end{equation}
\end{lemma}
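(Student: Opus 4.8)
The plan is to apply Ptolemy's inequality directly to the four points $O,A,B,C$, with the one cyclic labeling that makes $\dist{O}{A}$ and $\dist{B}{C}$ play the role of the two diagonals. Concretely, I would instantiate the inequality stated above by substituting its $A,B,C,D$ with $O,B,A,C$ respectively, i.e.\ reading the four points as the quadrilateral $OBAC$ whose diagonals are the segments $OA$ and $BC$. This yields
\[
\dist{O}{A}\cdot\dist{B}{C}\;\le\;\dist{O}{B}\cdot\dist{A}{C}+\dist{B}{A}\cdot\dist{O}{C}.
\]
The point of this particular labeling is that it isolates the product $\dist{O}{A}\cdot\dist{B}{C}$ on the left, so that the hypothesis on $\dist{B}{C}$ can be brought to bear on the right-hand side.

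Next I would invoke the assumption $\dist{B}{C}\ge\max\{\dist{A}{B},\dist{A}{C}\}$ to bound the two ``cross'' side-lengths $\dist{A}{C}$ and $\dist{B}{A}$ from above by $\dist{B}{C}$, and factor it out:
\[
\dist{O}{A}\cdot\dist{B}{C}\;\le\;\dist{O}{B}\cdot\dist{B}{C}+\dist{B}{C}\cdot\dist{O}{C}\;=\;\dist{B}{C}\bigl(\dist{O}{B}+\dist{O}{C}\bigr).
\]
Dividing through by $\dist{B}{C}$ then gives exactly the claimed inequality~\eqref{eq:ptolemaictriangle}.

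The only genuinely delicate points are mild. First, one must get the cyclic labeling right so that $OA$ and $BC$ are the diagonals rather than a pair of opposite sides; with any other choice the hypothesis $\dist{B}{C}\ge\max\{\dist{A}{B},\dist{A}{C}\}$ would not simplify the right-hand side. Second, the division by $\dist{B}{C}$ requires $\dist{B}{C}>0$, so I would dispose of the degenerate case separately: if $\dist{B}{C}=0$ then the metric axioms force $B=C$, and the hypothesis additionally forces $\dist{A}{B}=\dist{A}{C}=0$, hence $A=B=C$; the desired inequality then reads $\dist{O}{A}\le 2\,\dist{O}{A}$, which holds trivially since distances are nonnegative. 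Beyond these two bookkeeping steps, the argument is a single application of Ptolemy followed by the monotonicity substitution, so I expect no real obstacle.
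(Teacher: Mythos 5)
Your proof is correct and follows essentially the same route as the paper: a single application of Ptolemy's inequality with $OA$ and $BC$ as the diagonals, followed by bounding the side lengths $\dist{A}{B}$ and $\dist{A}{C}$ by $\dist{B}{C}$ and dividing through. You are in fact slightly more careful than the paper, which silently divides by $\dist{B}{C}$ without addressing the degenerate case $\dist{B}{C}=0$ that you correctly dispose of.
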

\begin{proof}
Using $\dist{B}{C}\geq \max\{\dist{A}{B},\dist{A}{C}\}$ and Ptolemy's inequality, we get 
$$
\dist{B}{C}\cdot \dist{O}{A} \leq \dist{B}{C}\cdot \dist{O}{B} + \dist{B}{C}\cdot \dist{O}{C},
$$
and the result follows.
\end{proof}

Using the above inequality, we can get a constant bound on the approximation ratio by focusing on the extremities of a diameter of each uncertainty set $\U_i, i\in \VV$. It results in the following ratio.

\begin{theorem}\label{thm:constant_ratio_general}
Let $d$ be a Ptolemaic distance and $G$ be a graph. Then, $\costmax(G)\leq 4\:\cost_d(G)$.
\end{theorem}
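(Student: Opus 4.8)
The plan is to reduce each uncertainty set to the two endpoints of one of its diameters and then to recombine the resulting distances through an averaging argument. For each vertex $i$, fix two points $a_i, b_i \in \U_i$ realizing a diameter of $\U_i$, so that $d(a_i, b_i) = \diam(\U_i) \geq d(u, u')$ for all $u, u' \in \U_i$. The point of choosing diameter endpoints is that, by this maximality, every other point of $\U_i$ lies on the short side of the triangle it forms with $a_i$ and $b_i$, which is exactly the configuration covered by Lemma~\ref{lem:triangle}.

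First I would bound $\dmax_{ij}$ for a single edge $\{i,j\}$. Let $p \in \U_i$ and $q \in \U_j$ attain $\dmax_{ij} = d(p,q)$. Applying Lemma~\ref{lem:triangle} to the triangle $q\,a_j\,b_j$, whose longest side is $a_j b_j$ because it is a diameter of $\U_j$, with $O = p$ gives $d(p,q) \leq d(p, a_j) + d(p, b_j)$. Applying the lemma a second time to the triangle $p\,a_i\,b_i$, first with $O = a_j$ and then with $O = b_j$, eliminates the remaining free point $p$ and yields
$$
\dmax_{ij} \leq d(a_i,a_j) + d(a_i,b_j) + d(b_i,a_j) + d(b_i,b_j).
$$
Summing over all edges bounds $\costmax(G)$ by the total of these four endpoint terms over $E[G]$.

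The main obstacle is that the two cross terms $d(a_i,b_j)$ and $d(b_i,a_j)$ do not arise from any single globally consistent placement of the vertices: the choice of which endpoint to use differs from edge to edge, and since $G$ need not be bipartite, it cannot be realized by one fixed assignment. To circumvent this I would argue probabilistically. Consider the random placement $\uu^R \in \U$ that independently sets each vertex to $a_i$ or $b_i$, each with probability $1/2$. For a fixed edge, $d(\uu^R_i, \uu^R_j)$ equals each of the four endpoint values with probability $1/4$, so by linearity of expectation $\Ex[c(\uu^R,G)]$ is exactly one quarter of the four-term sum accumulated over all edges. Since every realization of $\uu^R$ lies in $\U$, we have $c(\uu^R,G) \leq \cost(G)$ surely, hence $\Ex[c(\uu^R,G)] \leq \cost(G)$. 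Chaining this with the per-edge estimate gives $\costmax(G) \leq 4\,\Ex[c(\uu^R,G)] \leq 4\,\cost(G)$, which is the claim. One can phrase the last step deterministically instead, observing that the maximum of $c(\cdot,G)$ over the $2^{|V[G]|}$ endpoint placements is at least their average.
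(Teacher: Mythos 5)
Your proposal is correct and follows essentially the same route as the paper's proof: restrict each $\U_i$ to the two endpoints of a diameter, bound $\dmax_{ij}$ by the sum of the four endpoint-pair distances via three applications of Lemma~\ref{lem:triangle}, and conclude by averaging over the $2^{|V[G]|}$ endpoint placements. The only cosmetic difference is the order in which the two free points are eliminated in the per-edge estimate.
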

\begin{proof}
 For all $i\in\{1,\dots,n\},$ let $[u_i^1,u_i^2]$ be a diameter of $\U_i,$ i.e., $u_i^1\in\U_i, u_i^2\in\U_i$ and $\dist{u_i^1}{u_i^2}=\diam(\U_i)$.
Let $\widetilde{\U}_i=\{u_i^1,u_i^2\}$ and $\widetilde{\UU}$ be the cross product of the $\widetilde{\U}_i$.
Let $\costtilde(G) = \max_{\uu\in\widetilde{\UU}}\sum_{\{i,j\}\in E[G]} d(u_i,u_j)$. 
As $\widetilde{\UU} \subseteq \UU$, we get $\cost(G) \ge \costtilde(G)$.
Let us now prove that $\costtilde(G) \ge \frac{\costmax(G)}{4}$.

Define the random variable $\tilde{u}_i$ taking any value of $\widetilde{\U}_i$ with equal probability $1/2$.
The worst-case length of the graph is not smaller than its expected edge length, i.e.,
$$\costtilde(G)=\max_{\uu\in\widetilde{\UU}}\sum_{\{i,j\}\in E[G]} \dist{u_i}{u_j}\geq \Ex\left[\sum_{\{i,j\}\in E[G]} \dist{\tilde{u}_i}{\tilde{u}_j}\right],$$
where, by linearity of expectation,
$$\Ex\left[\sum_{\{i,j\}\in E[G]} \dist{\tilde{u}_i}{\tilde{u}_j}\right] = \sum_{\{i,j\}\in E[G]} \Ex \left[\dist{\tilde{u}_i}{\tilde{u}_j}\right].$$
We then consider some arbitrary edge $\{i,j\}\in E[G]$:
\begin{align*} \Ex \left[\dist{\tilde{u}_i}{\tilde{u}_j}\right]=&\frac{1}{4}\left(\dist{u_i^1}{u_j^1} + \dist{u_i^1}{u_j^2} + \dist{u_i^2}{u_j^1} + \dist{u_i^2}{u_j^2} \right).
\end{align*}
Let $\bar{u}_i\in\U_i$ and $\bar{u}_j\in\U_j$ such that $\dist{\bar{u}_i}{\bar{u}_j}= \dmax_{ij}$.
As $u_j^1u_j^2$ is a diameter of $\U_j$, we have $d(u_j^1,u_j^2) \ge \max(d(u_j^1,\bar{u}_j),d(u_j^2,\bar{u}_j))$, and we can apply Lemma~\ref{lem:triangle} twice to the triplet $(u_j^1,u_j^2,\bar{u}_j)$ to get
\begin{equation*}
\left\{\begin{aligned}
 \dist{u_i^1}{u_j^1} + \dist{u_i^1}{u_j^2} &\geq \dist{u_i^1}{\bar{u}_j}\\
 \dist{u_i^2}{u_j^1} + \dist{u_i^2}{u_j^2} &\geq \dist{u_i^2}{\bar{u}_j}.
\end{aligned}\right.
\end{equation*}
One last application of the lemma in $u_i^1u_i^2\bar{u}_i$ then yields
$$\dist{u_i^1}{\bar{u}_j}+\dist{u_i^2}{\bar{u}_j}\geq \dist{\bar{u}_i}{\bar{u}_j}=\dmax_{ij}.$$
Summarizing the above, we get 
to $\cost(G)\ge\costtilde(G) \ge \sum_{\{i,j\}\in E[G]} \Ex \left[\dist{\tilde{u}_i}{\tilde{u}_j}\right] \ge \frac{1}{4}\costmax(G)$.
\end{proof}

\subsubsection{Arbitrary distances}
\label{sec:nonptolemaic}

Lemma~\ref{lem:triangle} does not apply to non-Ptolemaic distances, as illustrated in the following example.

\begin{figure}[ht]
\begin{center}
  \includegraphics[width=0.2\textwidth]{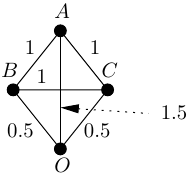}
\end{center}
\caption{Counter example of Lemma~\ref{lem:triangle} for non-Ptolemaic distances.}
\label{fig:abco}
\end{figure}

\begin{example}
\label{example:nonptolemaic}
Consider a distance matrix $d$ based on the index set $\{A,B,C,O\}$ such that for $X\neq Y$ (see also Figure~\ref{fig:abco})
$$
d(X,Y)=
\left\{
\begin{array}{ll}
% 2 & \mbox{if }\{X,Y\}\nsubseteq \{A,B,C,O\}\\
 1 & \mbox{if }\{X,Y\}\subseteq \{A,B,C\}\\
 0.5 & \mbox{if }\{X,Y\}\in\{\{O,B\},\{O,C\}\}\\
 1.5 & \mbox{if }\{X,Y\}=\{O,A\}\\
\end{array}
\right.,
$$
One readily verifies that $d(O,A) = \frac{3}{2} (d(O,B) + d(O,C))$, which contradicts~\eqref{eq:ptolemaictriangle}.
\end{example}

In fact, multiplying the right-hand-side of~\eqref{eq:ptolemaictriangle} by 3/2, as in Example~\ref{example:nonptolemaic}, is enough for any distance matrix.
\begin{lemma}\label{lem:triangle_any_distance}
Let $(A,B,C)$ be a triplet such that $\dist{B}{C}\geq \max\{\dist{A}{B},\dist{A}{C}\}$ and $\{O\}$ index another coordinate of $d$. Then, 
$$ d(O,A) \leq \frac{3}{2} (d(O,B) + d(O,C)).$$
\end{lemma}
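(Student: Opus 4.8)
The plan is to prove the bound $d(O,A) \leq \frac{3}{2}(d(O,B)+d(O,C))$ for an arbitrary metric space, using only the triangle inequality together with the hypothesis that $\dist{B}{C} \geq \max\{\dist{A}{B}, \dist{A}{C}\}$. Since we cannot invoke Ptolemy's inequality here, the strategy is to bound $d(O,A)$ by first routing through either $B$ or $C$ via the triangle inequality, and then carefully combining the resulting estimates so that the constant $3/2$ emerges. The worst case of Example~\ref{example:nonptolemaic} suggests that the bound is tight, so the argument must be sharp and cannot afford to throw away too much slack.

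First, I would apply the triangle inequality in the two natural ways:
\begin{equation*}
d(O,A) \leq d(O,B) + d(B,A), \qquad d(O,A) \leq d(O,C) + d(C,A).
\end{equation*}
The difficulty is that $d(B,A)$ and $d(C,A)$ are not directly controlled by $d(O,B)$ and $d(O,C)$. To relate the side lengths, I would use the hypothesis $d(B,C) \geq \max\{d(A,B), d(A,C)\}$ together with the triangle inequality applied to the triangle $OBC$, namely $d(B,C) \leq d(O,B) + d(O,C)$. This yields $d(A,B) \leq d(O,B) + d(O,C)$ and $d(A,C) \leq d(O,B) + d(O,C)$, which is the key step converting the ``longest side'' assumption into a bound on the edges incident to $A$ in terms of the distances from $O$.

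Next, I would combine these bounds. Substituting $d(A,B) \leq d(O,B) + d(O,C)$ into the first triangle inequality gives $d(O,A) \leq 2\,d(O,B) + d(O,C)$, and symmetrically $d(O,A) \leq d(O,B) + 2\,d(O,C)$. Averaging these two inequalities produces
\begin{equation*}
d(O,A) \leq \frac{3}{2}\bigl(d(O,B) + d(O,C)\bigr),
\end{equation*}
which is exactly the claimed bound. The main obstacle is recognizing that one must average the two asymmetric estimates rather than use either one alone, since each individual bound has a leading coefficient of $2$ and only the symmetrization brings the worse coefficient down to $3/2$; the role of the hypothesis on $d(B,C)$ is precisely to make both asymmetric bounds available simultaneously. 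The tightness in Example~\ref{example:nonptolemaic}, where $d(O,B)=d(O,C)=0.5$ forces $d(O,A)=1.5$, confirms that no better constant is attainable by this routing.
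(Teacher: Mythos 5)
Your proof is correct and follows essentially the same route as the paper: both derive the two asymmetric bounds $d(O,A)\leq 2d(O,B)+d(O,C)$ and $d(O,A)\leq d(O,B)+2d(O,C)$ by chaining the triangle inequality through the longest-side hypothesis, and then add (equivalently, average) them to obtain the factor $3/2$. There is no substantive difference between your argument and the paper's.
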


\begin{proof}
Using $\dist{B}{C}\geq \max\{\dist{A}{B},\dist{A}{C}\}$ and applying the triangle inequality at each step, we get 
\begin{align*}
d(O,A) \leq d(O,B) + \dist{A}{B} \leq d(O,B) + \dist{B}{C} \leq 2d(O,B)+d(O,C)\\
d(O,A) \leq d(O,C) + \dist{A}{C} \leq d(O,C) + \dist{B}{C} \leq d(O,B) + 2d(O,C) 
\end{align*}
Adding the above two inequalities provides the result.
\end{proof}

Using the above result, we can obtain a weaker counterpart of Theorem~\ref{thm:constant_ratio_general} for non-Ptolemaic distances.

\begin{theorem}\label{thm:constant_ratio_non_pto}
For any distance matrix $d$, $\costmax(G)\leq 9\:\cost_d(G)$.
\end{theorem}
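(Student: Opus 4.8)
The plan is to follow the proof of Theorem~\ref{thm:constant_ratio_general} essentially verbatim, substituting Lemma~\ref{lem:triangle_any_distance} for Lemma~\ref{lem:triangle} at every step and tracking how the multiplicative constant $3/2$ accumulates. As before, for each vertex $i$ I would fix a diameter $[u_i^1,u_i^2]$ of $\U_i$, set $\widetilde{\U}_i=\{u_i^1,u_i^2\}$, let $\widetilde{\UU}$ be the cross product of these pairs, and define $\costtilde(G)=\max_{\uu\in\widetilde{\UU}}\sum_{\{i,j\}\in E[G]}\dist{u_i}{u_j}$. Since $\widetilde{\UU}\subseteq\UU$, we immediately get $\cost(G)\ge\costtilde(G)$, so the whole task reduces to proving $\costtilde(G)\ge\frac{1}{9}\costmax(G)$.

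Exactly as in the Ptolemaic argument, I would introduce the random variable $\tilde{u}_i$ taking each endpoint of the diameter with probability $1/2$, bound the worst case by the expectation, and use linearity to reduce to a per-edge estimate: for a fixed edge $\{i,j\}$, the quantity $\Ex[\dist{\tilde{u}_i}{\tilde{u}_j}]$ equals $\frac14(\dist{u_i^1}{u_j^1}+\dist{u_i^1}{u_j^2}+\dist{u_i^2}{u_j^1}+\dist{u_i^2}{u_j^2})$. The goal is then to lower-bound this sum of four pairwise distances by a constant multiple of $\dmax_{ij}=\dist{\bar{u}_i}{\bar{u}_j}$, where $\bar{u}_i\in\U_i$ and $\bar{u}_j\in\U_j$ realize the maximum.

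The core computation is where the constant changes. Because $u_j^1u_j^2$ is a diameter of $\U_j$, it is the longest side of triangle $u_j^1u_j^2\bar{u}_j$, so Lemma~\ref{lem:triangle_any_distance} applies with apex $\bar{u}_j$ and any outside point $O$. Taking $O=u_i^1$ and $O=u_i^2$ gives $\dist{u_i^1}{u_j^1}+\dist{u_i^1}{u_j^2}\ge\frac23\dist{u_i^1}{\bar{u}_j}$ and $\dist{u_i^2}{u_j^1}+\dist{u_i^2}{u_j^2}\ge\frac23\dist{u_i^2}{\bar{u}_j}$. Adding these and then applying the lemma once more in triangle $u_i^1u_i^2\bar{u}_i$ (with apex $\bar{u}_i$ and $O=\bar{u}_j$) yields $\dist{u_i^1}{\bar{u}_j}+\dist{u_i^2}{\bar{u}_j}\ge\frac23\dmax_{ij}$. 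Composing the two factors of $2/3$, the sum of the four distances is at least $\frac49\dmax_{ij}$, whence $\Ex[\dist{\tilde{u}_i}{\tilde{u}_j}]\ge\frac19\dmax_{ij}$. Summing over edges gives $\costtilde(G)\ge\frac19\costmax(G)$ and hence the claim.

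I do not expect a genuine obstacle here: the only thing to watch is the direction of each inequality when passing from the lemma's upper bound $d(O,A)\le\frac32(d(O,B)+d(O,C))$ to the lower bounds $d(O,B)+d(O,C)\ge\frac23\,d(O,A)$ used above, and confirming that the long-side hypothesis of Lemma~\ref{lem:triangle_any_distance} is met at each application, which holds precisely because we chose the $[u_i^1,u_i^2]$ to be diameters. The final constant is $4\cdot(3/2)^2=9$, the extra factor $(3/2)^2$ relative to Theorem~\ref{thm:constant_ratio_general} coming from the two nested rounds in which the general triangle lemma replaces the Ptolemaic one.
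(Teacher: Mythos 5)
Your proof is correct and follows essentially the same route as the paper's: both reuse the diameter-plus-randomization argument of Theorem~\ref{thm:constant_ratio_general} verbatim, substitute Lemma~\ref{lem:triangle_any_distance} for Lemma~\ref{lem:triangle} in the three triangle applications, and arrive at $\Ex[\dist{\tilde{u}_i}{\tilde{u}_j}]\ge\frac{1}{4}\cdot\frac{4}{9}\dmax_{ij}=\frac{1}{9}\dmax_{ij}$. The bookkeeping of the two nested factors of $2/3$ and the verification of the long-side hypothesis via the diameter choice match the paper exactly.
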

\begin{proof}
 The proof follows exactly the approach followed in the proof of Theorem~\ref{thm:constant_ratio_general}, but we use Lemma~\ref{lem:triangle_any_distance} instead of Lemma~\ref{lem:triangle} when needed. 
We thus use the same notations as in the proof of Theorem~\ref{thm:constant_ratio_general}.
As $u_j^1u_j^2$ is a diameter of $\U_j$, we have $d(u_j^1,u_j^2) \ge \max(d(u_j^1,\bar{u}_j),d(u_j^2,\bar{u}_j))$, and we can apply Lemma~\ref{lem:triangle_any_distance} twice to triplet $(u_j^1,u_j^2,\bar{u}_j)$ to get
\begin{equation*}
\left\{\begin{aligned}
 \dist{u_i^1}{u_j^1} + \dist{u_i^1}{u_j^2} &\geq \frac{2}{3}\dist{u_i^1}{\bar{u}_j}\\
 \dist{u_i^2}{u_j^1} + \dist{u_i^2}{u_j^2} &\geq \frac{2}{3}\dist{u_i^2}{\bar{u}_j}.
\end{aligned}\right.
\end{equation*}
One last application of Lemma~\ref{lem:triangle_any_distance} in $u_i^1u_i^2\bar{u}_i$ then yields
$$\dist{u_i^1}{\bar{u}_j}+\dist{u_i^2}{\bar{u}_j}\geq \frac{2}{3}\dist{\bar{u}_i}{\bar{u}_j}=\frac{2}{3}\dmax_{ij}.$$
Summarizing the above, we get 
to $$\cost(G)\geq \costtilde(G) \ge \sum_{\{i,j\}\in E[G]} \Ex \left[\dist{\tilde{u}_i}{\tilde{u}_j}\right] \ge \frac{1}{4}\frac{4}{9}\costmax(G)= \frac{\costmax(G)}{9}.$$
\end{proof}

 \subsection{Bounding the approximation ratio on specific structures}
 \label{sec:specificgraphs}
 
 In what follows, we assume that the structure of the subragph induced by $G$ can be leveraged to obtain stronger bounds than in the previous section.
 We first describe how graph decomposition can be used to obtain such bounds.
 We then address the special graphs that have been singled out in our introductory applications, namely: paths, cycles, trees (subway network design), cliques (clustering), and stars (facility location). Unless stated otherwise, we assume throughout the section that $d$ is an arbitrary distance matrix, non-necessarily Ptolemaic.
 
\subsubsection{Building blocks}
\label{sec:BB}

We study below how the bounds obtained for distinct subgraphs can be combined to obtain a bound on their union.
\begin{proposition}\label{prop:graph-union}
Let $\GG$ be a graph, $G_t$ be a subgraph of $\GG$, and $\rho_t\geq 1$ such that $\costmax(G_t) \leq \rho_t \:\cost(G_t)$ for each $t=1,\ldots,T$. Then:
\begin{itemize}
\item $\costmax\left(\cup_{t=1}^TG_t\right) \leq T\times\max\limits_{t=1,\ldots,T}\rho_t\:\cost\left(\cup_{t=1}^TG_t\right)$, and
\item $\costmax(\cup_{t=1}^TG_t) \leq \max\limits_{t=1,\ldots,T}\rho_t\:\cost(\cup_{t=1}^TG_t)$ if, in addition,  $V(G_t)\cap V(G_{t'})=\emptyset$ for each $t\neq t'\in [T]$.
\end{itemize}

\end{proposition}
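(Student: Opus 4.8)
The plan is to prove the two inequalities separately, in both cases reducing the bound on the union to the per-block bounds via the additivity of $\costmax$ over edges.

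First I would settle the easy direction that holds in both items: for any collection of subgraphs, $\cost\left(\cup_{t=1}^T G_t\right) \geq \max_t \cost(G_t)$. This follows because the adversary maximizing over $\UU$ for the union can, in particular, use the worst-case positions that are optimal for any single block $G_t$, and the remaining edges only add nonnegative length; more carefully, $\cost(\cup_t G_t) = \max_{\uu\in\UU}\sum_{\{i,j\}\in E[\cup_t G_t]} d(u_i,u_j) \geq \max_{\uu\in\UU}\sum_{\{i,j\}\in E[G_t]} d(u_i,u_j) = \cost(G_t)$ for each fixed $t$, since dropping edges only decreases a sum of nonnegative distances.

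For the first item, the key observation is that $\costmax$ is simply additive over edges, so $\costmax(\cup_t G_t) \leq \sum_{t=1}^T \costmax(G_t)$ (with equality when the blocks are edge-disjoint, but inequality suffices here). I would then chain $\costmax(\cup_t G_t) \leq \sum_t \costmax(G_t) \leq \sum_t \rho_t \cost(G_t) \leq \left(\max_t \rho_t\right)\sum_t \cost(G_t) \leq \left(\max_t \rho_t\right) T\, \max_t\cost(G_t) \leq T\left(\max_t \rho_t\right)\cost(\cup_t G_t)$, where the last step uses the easy direction above. This yields exactly the claimed factor $T\times\max_t \rho_t$.

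For the second item, disjointness is what lets me avoid the wasteful $\sum_t \cost(G_t) \leq T\max_t\cost(G_t)$ step. The crucial strengthening is that when the $G_t$ are (vertex-)disjoint, the adversary can choose the worst positions for each block \emph{independently}, so the worst case decomposes: $\cost(\cup_t G_t) = \sum_{t=1}^T \cost(G_t)$. Granting this, I chain $\costmax(\cup_t G_t) = \sum_t \costmax(G_t) \leq \sum_t \rho_t\cost(G_t) \leq \left(\max_t\rho_t\right)\sum_t\cost(G_t) = \left(\max_t\rho_t\right)\cost(\cup_t G_t)$. The main obstacle is justifying the additivity $\cost(\cup_t G_t) = \sum_t \cost(G_t)$, which is precisely where disjointness is essential: since no vertex is shared between blocks, the positions $u_i$ relevant to distinct blocks range over independent factors of $\UU = \times_{i}\U_i$, so maximizing the block-separable sum $\sum_t \sum_{\{i,j\}\in E[G_t]} d(u_i,u_j)$ over $\UU$ equals the sum of the individual maxima. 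I would state this as the one nontrivial step and note that the hypothesis $G_t\cap G_{t'}=\emptyset$ should be read as vertex-disjointness (not merely edge-disjointness), since edge-disjoint blocks sharing a vertex could force a single worst-case position on that vertex and break the decomposition.
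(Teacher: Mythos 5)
Your proof is correct and follows essentially the same route as the paper's: subadditivity of $\costmax$ over the blocks combined with the per-block hypothesis, plus $\sum_t\cost(G_t)\leq T\,\cost(\cup_t G_t)$ for the first item and the decomposition $\cost(\cup_t G_t)=\sum_t\cost(G_t)$ under disjointness for the second. Your added remark that $G_t\cap G_{t'}=\emptyset$ must mean vertex-disjointness (so the maximization over $\UU$ separates across blocks) is a correct and worthwhile clarification of the one step the paper leaves implicit.
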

\begin{proof}
Let $G = \cup_{t=1}^TG_t$. 
In the first case we have
 $\displaystyle
 T\cdot\cost(G) \geq \sum_{t=1}^T \cost(G_t) \geq \sum_{t=1}^T \frac{1}{\rho_t} \costmax(G_t)
 \ge \frac{\costmax(G)}{\max\limits_{t=1,\ldots,T}\rho_t}.
 $

In the second case we have
  $\displaystyle
 \cost(G)
 = \sum_{t=1}^T \cost(G_t)
 \geq \sum_{t=1}^T \frac{1}{\rho_t} \costmax(G_t)
 \geq \frac{\costmax(G)}{\max\limits_{t=1,\ldots,T}\rho_t}.
 $
\end{proof}

The above results are particularly useful when combining sets $G_t$ having low values of $\rho_t$. The simplest example of such a set is a single edge.
\begin{observation}
\label{obs:edge}
For any $e\in E[G]$, $\costmax(e)=\cost(e)$.
\end{observation}
From the above observation and Proposition~\ref{prop:graph-union}, we obtain immediately that matchings also satisfy the equality.
\begin{corollary}
For any matching $G$, $\costmax(G)=\cost(G)$.
\end{corollary}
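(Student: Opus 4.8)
The plan is to recognize a matching as a vertex-disjoint union of single edges and then invoke the disjoint-union case of Proposition~\ref{prop:graph-union} together with Observation~\ref{obs:edge}. First I would write the matching $G$ as $G=\cup_{t=1}^T e_t$, where $e_1,\ldots,e_T$ are its edges. Since by definition no two edges of a matching share an endpoint, the corresponding single-edge subgraphs are pairwise disjoint, so the hypothesis $G_t\cap G_{t'}=\emptyset$ required by the second bullet of Proposition~\ref{prop:graph-union} is satisfied.

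Next, Observation~\ref{obs:edge} gives $\costmax(e_t)=\cost(e_t)$ for every edge, so I may take $\rho_t=1$ for all $t$. Applying the disjoint-union case of Proposition~\ref{prop:graph-union} then yields
$$
\costmax(G)\leq \max_{t=1,\ldots,T}\rho_t\:\cost(G)=\cost(G).
$$
It remains to note the reverse inequality, which in fact holds for any graph: for every fixed $\uu\in\UU$ we have $\sum_{\{i,j\}\in E[G]}\dist{u_i}{u_j}\leq \sum_{\{i,j\}\in E[G]}\dmax_{ij}=\costmax(G)$, and taking the maximum over $\uu\in\UU$ gives $\cost(G)\leq \costmax(G)$. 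Combining the two inequalities yields $\costmax(G)=\cost(G)$.

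There is no genuine obstacle here, as the statement is an immediate corollary of the two preceding results. The only point worth verifying carefully is that the notion of disjointness required by Proposition~\ref{prop:graph-union}, namely $G_t\cap G_{t'}=\emptyset$ as subgraphs, is precisely what a matching provides: vertex-disjointness of the edges guarantees that the single-edge subgraphs share neither vertices nor edges, which is exactly what the proposition needs.
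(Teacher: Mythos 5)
Your proof is correct and follows exactly the route the paper intends: decompose the matching into pairwise disjoint single edges, apply Observation~\ref{obs:edge} with $\rho_t=1$ and the disjoint-union case of Proposition~\ref{prop:graph-union} to get $\costmax(G)\leq\cost(G)$, and combine with the always-valid reverse inequality $\cost(G)\leq\costmax(G)$. The paper states this as immediate from those two results, so your write-up simply makes explicit what the paper leaves implicit.
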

Matchings can be further combined to obtain general bounds that depend on the characteristics of $G$.
In the remainder, $\chi(G)$ denotes the edge chromatic number of $G$ and $\Delta(G)$ denotes its maximum degree.
\begin{corollary}\label{cor:block-coloring}
$\costmax(G)\leq\chi(G)\:\cost(G)$.
\end{corollary}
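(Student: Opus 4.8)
The plan is to realize an optimal edge-coloring as a decomposition of $G$ into matchings and then feed this decomposition into Proposition~\ref{prop:graph-union}. By the very definition of the edge chromatic number, there is a proper edge-coloring of $G$ with $\chi(G)$ colors, and each color class $M_t$ (for $t=1,\dots,\chi(G)$) is a matching, since no two edges sharing a vertex receive the same color. These color classes partition $E[G]$, so $\cup_{t=1}^{\chi(G)} M_t = G$.

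The second ingredient is the Corollary stating that every matching satisfies $\costmax(M)=\cost(M)$. Applied to each color class, this gives $\costmax(M_t)\leq \rho_t\,\cost(M_t)$ with $\rho_t=1$ for all $t$. At this point every hypothesis of Proposition~\ref{prop:graph-union} is met with $T=\chi(G)$ subgraphs and $\max_t \rho_t = 1$.

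I would then invoke the \emph{first} bullet of Proposition~\ref{prop:graph-union}, which yields
$$
\costmax(G)=\costmax\Big(\cup_{t=1}^{\chi(G)} M_t\Big)\leq \chi(G)\cdot\max_{t}\rho_t\cdot\cost(G)=\chi(G)\,\cost(G),
$$
which is exactly the claimed bound.

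The only point requiring care — and the reason the bound carries the multiplicative factor $\chi(G)$ rather than being an equality — is the choice between the two cases of Proposition~\ref{prop:graph-union}. The color classes are edge-disjoint but generically \emph{share vertices}, so they are not disjoint as subgraphs; hence the sharper second bullet (which would give factor $\max_t \rho_t = 1$) does not apply, and one must settle for the general first case that pays the factor $T=\chi(G)$. Beyond this observation the argument is a direct substitution, so I do not anticipate any genuine obstacle.
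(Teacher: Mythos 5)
Your proof is correct and matches the paper's argument, which likewise partitions $E[G]$ into $\chi(G)$ edge-disjoint matchings and applies the first case of Proposition~\ref{prop:graph-union} together with the fact that $\costmax(M)=\cost(M)$ for a matching $M$. Your remark about why the second (vertex-disjoint) case does not apply is a sensible clarification of the paper's one-line proof.
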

\begin{proof}\revision{
  By definition of $\chi(G)$, there exists a partition of $E[G]$ into $E_1,\dots,E_{\chi(G)}$ such that for any $t \in [\chi(G)]$,
  $E_t$ is a matching. For any $t \in [\chi(G)]$, we define $G_t$ such that $G_t$ only contains edges of $E_t$ (and no other vertices of edges).
  We have $G=\bigcup_{t=1}^{\chi(G)}G_t$, and by previous corollary on matching, $\costmax(G_t) \le c(G_t)$ for ant $t$.
  Thus, by Proposition~\ref{prop:graph-union}, we get the desired inequality.}
\end{proof}

\subsubsection{Graphs with small maximum degree}
%Let $\chi(G)$ and $\Delta(G)$ denote the chromatic number of $G$ and its maximum degree, respectively.
Recall that Vizing's theorem states that $\chi(G)\leq \Delta(G)+1$. Combining this with Corollary~\ref{cor:block-coloring} implies that $\costmax(G)\leq (\Delta(G)+1)\cost(G)$. Actually, the bound can be decreased to $\Delta(G)$, as stated below.

\begin{proposition}\label{prop:delta}
For any graph $G$, $\costmax(G) \le \Delta(G)\:\cost(G)$.
\end{proposition}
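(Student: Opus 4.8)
The plan is to improve the immediate bound $\costmax(G)\le(\Delta+1)\cost(G)$, which follows from Vizing's theorem together with Corollary~\ref{cor:block-coloring}, by decomposing $G$ into pieces that are as efficient as possible. First I would record a sharper version of the first part of Proposition~\ref{prop:graph-union}: if $E[G]$ is partitioned into edge-disjoint subgraphs $H_1,\ldots,H_T$ with $\costmax(H_t)\le\rho_t\,\cost(H_t)$, then, using the additivity of $\costmax$ over an edge partition together with the monotonicity $\cost(H_t)\le\cost(G)$,
$$
\costmax(G)=\sum_{t=1}^{T}\costmax(H_t)\le\sum_{t=1}^{T}\rho_t\,\cost(H_t)\le\Big(\sum_{t=1}^{T}\rho_t\Big)\cost(G).
$$
This replaces the factor $T\max_t\rho_t$ of Proposition~\ref{prop:graph-union} by the smaller $\sum_t\rho_t$, which is exactly what lets pieces of different quality (matchings versus degree-two subgraphs) be combined without rounding up.

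Next I would reduce $G$ to subgraphs of maximum degree at most two. Such subgraphs are disjoint unions of paths and cycles, for which a ratio of $2$ is available (Corollary~\ref{cor:ratio_paths_cycles}), whereas a matching has ratio $1$ (Observation~\ref{obs:edge} and the fact that matchings satisfy $\costmax=\cost$). The classical result that a graph of maximum degree $\Delta$ can be edge-decomposed into $\lceil\Delta/2\rceil$ subgraphs of maximum degree at most two, obtained by completing $G$ to a $2\lceil\Delta/2\rceil$-regular graph and applying Petersen's $2$-factorization, handles the even case directly: with $T=\Delta/2$ pieces of ratio $2$, the displayed inequality gives $\costmax(G)\le\Delta\,\cost(G)$. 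For odd $\Delta$ the target is instead a decomposition into $(\Delta-1)/2$ degree-two subgraphs (ratio $2$) together with one matching (ratio $1$), so that $\sum_t\rho_t=(\Delta-1)+1=\Delta$.

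The hard part is precisely this odd case. Producing $(\Delta-1)/2$ degree-two pieces plus a matching amounts to splitting off a matching that lowers the maximum degree to $\Delta-1$, and such a matching need not exist for graphs that are not $2$-factorable (e.g.\ cubic graphs with no perfect matching); for these, the decomposition alone stalls at $\Delta+1$. The genuine improvement to $\Delta$ must therefore come from the metric, exactly as in the proofs of Theorems~\ref{thm:constant_ratio_general} and~\ref{thm:constant_ratio_non_pto}: one lower-bounds $\cost(G)$ by the expected length of a randomized assignment of each vertex to positions in its $\U_i$ and uses the triangle inequality to extract a factor $1/\Delta$ per edge. This is also the ingredient hidden inside the ratio-$2$ bound for \emph{odd} cycles, which cannot be recovered from edge colourings (an odd cycle has chromatic index $3$, and its fractional chromatic index exceeds $2$) and is where metric reasoning is unavoidable. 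I would thus expect the clean ``$\sum_t\rho_t=\Delta$'' bookkeeping to dispatch all $2$-factorable configurations, with the residual non-factorable and odd-cycle cases closed by the probabilistic triangle-inequality argument.
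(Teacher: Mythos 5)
There is a genuine gap. Your sharpened composition bound $\costmax(G)\le\bigl(\sum_{t}\rho_t\bigr)\cost(G)$ for an edge partition is correct (additivity of $\costmax$ plus $\cost(H_t)\le\cost(G)$), and with it the even-$\Delta$ case does reduce to a Petersen-type decomposition into $\Delta/2$ subgraphs of maximum degree two. But the argument never closes. First, it is circular as stated: Corollary~\ref{cor:ratio_paths_cycles} is deduced in the paper \emph{from} Proposition~\ref{prop:delta}, and, as you yourself observe, the factor $2$ for odd cycles cannot be recovered from edge colourings (chromatic index $3$), so invoking it as a black box begs the question. Second, and more seriously, for odd $\Delta$ you correctly identify that the needed decomposition into $(\Delta-1)/2$ degree-two subgraphs plus a matching can fail (cubic graphs without a perfect matching), and you then defer both this case and the odd-cycle base case to ``the probabilistic triangle-inequality argument'' without carrying it out. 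That deferred argument is not a residual patch: it is the entire content of the paper's proof, which uses no decomposition at all.

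Concretely, what is missing is the following. For each edge $\{i,j\}$ fix $u_i^j\in\U_i$, $u_j^i\in\U_j$ with $d(u_i^j,u_j^i)=\dmax_{ij}$, set $\widetilde\U_i=\{u_i^\ell:\ell\in N(i)\}$, and let $\tilde u_i$ be uniform on $\widetilde\U_i$. The crux is to show $\Ex[d(\tilde u_i,\tilde u_j)]\ge\dmax_{ij}/\Delta_i$ when $\Delta_j\le\Delta_i$: one chooses an injection $\phi:N(j)\setminus\{i\}\to N(i)\setminus\{j\}$ and, for each $\ell\in N(j)\setminus\{i\}$, applies the triangle inequality to the three-edge path $u_i^j,\,u_j^\ell,\,u_i^{\phi(\ell)},\,u_j^i$ to extract one copy of $\dmax_{ij}$; together with the term $d(u_i^j,u_j^i)=\dmax_{ij}$ this yields $\Delta_j$ copies out of the $\Delta_i\Delta_j$ cross terms. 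Saying that a factor $1/\Delta$ per edge ``comes from the metric'' does not substitute for this pairing construction, which is where the improvement from $\Delta+1$ to $\Delta$ actually lives. Once that lemma is proved for an arbitrary edge of an arbitrary graph, the decomposition machinery becomes unnecessary scaffolding.
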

\begin{proof}
 We follow the proof of Theorem~\ref{thm:constant_ratio_general} (and re-use same notations), but using different sets $\widetilde{\U}_i$.    
For any edge $\{i,j\}$ of $G$, define $u^j_i \in \U_i$ and $u^i_j \in \U_j$ such that $\dist{u^i_j}{u^j_i}=d^{max}_{ij}$.
For any vertex $i$, let $\widetilde{\U}_i=\{u_i^\ell, \ell \in N(i)\}$, and $\Delta_i=|N(i)|$ be the degree of $i$.
Let $\costtilde(G) = \max_{\uu\in\widetilde{\UU}}\sum_{\{i,j\}\in E[G]} \dist{u_i}{u_j}$. 
As $\widetilde{\UU} \subseteq \UU$, we know that $\cost(G) \ge \costtilde(G)$.
Let us now prove that $\costtilde(G) \ge \frac{\costmax(G)}{\Delta(G)}$.

For any vertex $i$, consider random variable $\tilde{u}_i$ taking any value of $\widetilde{\U}_i$ with equal probability. As in the proof of Theorem~\ref{thm:constant_ratio_general}, it is enough to lower bound the following quantity for an arbitrary edge $\{i,j\}\in E[G]$:
\begin{align*}
  \Ex \left[\dist{\tilde{u}_i}{\tilde{u}_j}\right]=&\frac{1}{\Delta_i\Delta_j}\left( \sum\limits_{\ell_1 \in N(i), \ell_2 \in N(j)}\dist{u_i^{\ell_1}}{u_j^{\ell_2}} \right).
\end{align*}
Suppose without loss of generality that $\Delta_j \leq \Delta_i$ and let $X_i=N(i)\setminus\{j\}$ and $X_j=N(j)\setminus\{i\}$ ($X_i$ or $X_j$ may be empty).
Let us define an arbitrary injective mapping $\phi:X_j\rightarrow X_i$, so that
\begin{equation}
 \label{eq:injective}
 \sum\limits_{\ell_1 \in X_i}\sum_{\ell_2 \in X_j}\dist{u_i^{\ell_1}}{u_j^{\ell_2}}\geq \sum_{\ell_1\in X_j}\sum_{\ell_2 \in X_j}\dist{u_i^{\phi(\ell_1)}}{u_j^{\ell_2}}\geq \sum_{\ell_2 \in X_j}\dist{u_i^{\phi(\ell_2)}}{u_j^{\ell_2}}.
\end{equation}
Using~\eqref{eq:injective}, we have
\begin{align*}
  \sum\limits_{\ell_1 \in N(i) \ell_2 \in N(j)}\dist{u_i^{\ell_1}}{u_j^{\ell_2}} &\ge d^{max}_{ij}+  \sum\limits_{\ell \in X_j}\dist{u_i^j}{u_j^{\ell}}+ \sum\limits_{\ell_1 \in X_i, \ell_2 \in X_j}\dist{u_i^{\ell_1}}{u_j^{\ell_2}}+  \sum\limits_{\ell \in X_i}\dist{u_i^{\ell}}{u_j^{i}} \\
  &\ge  d^{max}_{ij}+  \sum\limits_{\ell \in X_j}\dist{u_i^j}{u_j^{\ell}}+ \sum\limits_{\ell\in X_j}\dist{u_i^{\phi(\ell)}}{u_j^{\ell}}+  \sum\limits_{\ell \in X_i}\dist{u_i^{\ell}}{u_j^{i}} \\
  &\ge  d^{max}_{ij}+  \sum\limits_{\ell \in X_j}\dist{u_i^j}{u_j^{\ell}}+ \sum\limits_{\ell\in X_j}\dist{u_i^{\phi(\ell)}}{u_j^{\ell}}+  \sum\limits_{\ell \in X_j}\dist{u_i^{\phi(\ell)}}{u_j^{i}} \\
  &=  d^{max}_{ij}+  \sum\limits_{\ell \in X_j}\left(\dist{u_i^j}{u_j^{\ell}}+ \dist{u_i^{\phi(\ell)}}{u_j^{\ell}}+ \dist{u_i^{\phi(\ell)}}{u_j^{i}}\right) \\
  &\ge  d^{max}_{ij}+  \sum\limits_{\ell \in X_j}d^{max}_{ij} = \Delta_{j}d^{max}_{ij}.
\end{align*}
We obtain
$$
  \Ex \left[\dist{\tilde{u}_i}{\tilde{u}_j}\right]\ge\frac{1}{\Delta_i\Delta_j}\Delta_{j}d^{max}_{ij}=\frac{1}{\Delta_i}d^{max}_{ij} \ge \frac{1}{\Delta(G)}d^{max}_{ij}.
$$
\end{proof}

Proposition~\ref{prop:delta} immediately implies the following Corollary, which had an ad-hoc proof for cycles in~\cite{citovsky2017tsp}.
\begin{corollary}\label{cor:ratio_paths_cycles}
  Let $G$ be a path or a cycle, then $\costmax(G)\leq 2\:\cost(G)$.
\end{corollary}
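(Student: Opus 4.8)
The plan is to recognize this as an immediate specialization of Proposition~\ref{prop:delta}, which already establishes $\costmax(G) \le \Delta(G)\:\cost(G)$ for an arbitrary graph $G$. The only thing that needs to be checked is that both graph families under consideration have maximum degree bounded by $2$.

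First I would verify the degree bound separately for the two cases. For a cycle, every vertex is incident to exactly two edges, so $\Delta(G) = 2$ directly. For a path, the two endpoints have degree $1$ while every internal vertex has degree $2$, hence $\Delta(G) \le 2$ (with equality as soon as the path contains at least one internal vertex). In all cases we therefore have $\Delta(G) \le 2$.

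Plugging this into Proposition~\ref{prop:delta} gives $\costmax(G) \le \Delta(G)\:\cost(G) \le 2\:\cost(G)$, which is the claimed bound. The degenerate cases --- a single edge or, for a path, two vertices --- are harmless: there $\Delta(G) \le 1$, so the inequality holds a fortiori, and it is in any event already covered by Observation~\ref{obs:edge}, which gives equality $\costmax(e)=\cost(e)$ on a single edge.

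I do not anticipate any genuine obstacle here: the result is a one-line corollary once Proposition~\ref{prop:delta} is available, the entire content being the elementary observation that paths and cycles are exactly the connected graphs of maximum degree at most $2$. The only point worth stating explicitly is this degree bound, after which the conclusion is purely mechanical.
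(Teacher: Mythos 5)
Your proof is correct and is exactly the paper's argument: the paper states that Corollary~\ref{cor:ratio_paths_cycles} is immediately implied by Proposition~\ref{prop:delta}, the whole content being precisely the observation that $\Delta(G)\leq 2$ for paths and cycles. Your additional remarks on degenerate cases are harmless and consistent with Observation~\ref{obs:edge}.
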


We verify in the two propositions below that the above bound is tight.
\begin{proposition}\label{prop:tight-paths}
For any path $G$ of length at least three, there exists an uncertainty set $\U$ such that $\costmax(G) = 2\:\cost(G)$.
\end{proposition}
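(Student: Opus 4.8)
The plan is to exhibit an explicit instance on the one-dimensional Euclidean line for which both $\cost(G)$ and $\costmax(G)$ can be evaluated exactly and satisfy $\costmax(G)=2\,\cost(G)$. Write the path as $v_1-v_2-\cdots-v_{k+1}$, where $k\ge 3$ is its length (number of edges). The guiding intuition is that on the line the only way to beat the ratio $2$ is a ``zig-zag'' assignment that simultaneously makes many consecutive edges attain their worst-case length; I would therefore confine the uncertainty sets to consecutive, overlapping unit intervals so that \emph{every} feasible position vector is forced to be monotone, which destroys all zig-zags. Concretely, I take $\Metric=\R$ with $d(x,y)=|x-y|$ and set $\U_1=\{0\}$, $\U_i=\{i-2,i-1\}$ for $2\le i\le k$, and $\U_{k+1}=\{k-1\}$.

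First I would establish monotonicity: for every $u\in\U$ and every index $i$, one has $u_i\le i-1\le u_{i+1}$, since $u_i\le i-1$ and $u_{i+1}\ge (i+1)-2=i-1$. Hence $u_1\le u_2\le\cdots\le u_{k+1}$, so the length of the path telescopes, $\sum_{\{i,i+1\}\in E[G]}|u_{i+1}-u_i|=u_{k+1}-u_1=k-1$, a value independent of the chosen $u$. Because this holds for every feasible $u$, it gives directly $\cost(G)=k-1$.

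Next I would compute $\costmax(G)$ edge by edge. For the two extreme edges, $\dmax_{1,2}=\max_{u_2\in\{0,1\}}|u_2-0|=1$ and $\dmax_{k,k+1}=\max_{u_k\in\{k-2,k-1\}}|(k-1)-u_k|=1$. For each internal edge $\{i,i+1\}$ with $2\le i\le k-1$ we have $u_i\in\{i-2,i-1\}$ and $u_{i+1}\in\{i-1,i\}$, so $\dmax_{i,i+1}=i-(i-2)=2$. Summing over the $k-2$ internal edges together with the two extreme ones yields $\costmax(G)=1+2(k-2)+1=2(k-1)$, whence $\costmax(G)=2(k-1)=2\,\cost(G)$, as desired.

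The only genuine obstacle here is conceptual, namely finding the right construction: a naive line example fails, because a single zig-zag (for instance alternating between the extreme options of the internal vertices) can make every edge attain its $\dmax$, giving a useless ratio of $1$. The key idea is the overlap of consecutive uncertainty sets, which removes exactly this freedom by forcing the monotone ordering $u_i\le i-1\le u_{i+1}$; once monotonicity is in place, $\cost(G)$ collapses to the constant span $k-1$ while each $\dmax_{i,i+1}$ still double-counts that span, so both quantities become trivial to evaluate. After verifying the computation above, I would simply conclude that this instance witnesses $\costmax(G)=2\,\cost(G)$ for every path of length $k\ge 3$.
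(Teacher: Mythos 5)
Your proof is correct and takes essentially the same approach as the paper: an explicit one-dimensional witness in which overlapping two-point uncertainty sets force the realized path length to be scenario-independent while $\dmax$ double-counts it. The only difference is that the paper uses the minimal such instance ($\U_1=\{0\}$, $\U_2=\{0,1\}$, and $\U_3=\dots=\U_n=\{1\}$, so that all but two edges have $\dmax_{i,i+1}=0$ and $\costmax(G)=2=2\cost(G)$), whereas your staircase spreads the uncertainty over every internal vertex; both verifications are sound.
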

\begin{proof}
 We consider a path $G$ composed of $n\geq 3$ vertices, where $E[G] = \{\{i,i+1\} : i\in\{1,\dots,n-1\}.$
The vertices are located on a one-dimensional line where $\U_1=\{0\}$, $\U_2=\{0,1\}$ and $\U_3=\dots=\U_n=\{1\}$.
We have $\dmax_{12}=\dmax_{23}=1$ and $\dmax_{i,i+1}=0,\forall i=3,\dots,n-1$, so $\costmax(G) = 2$. 
There are only two feasible solutions depending on whether $u_2=1$ or $u_2=0$, and they have the same cost $\cost(G)=1$.
\end{proof}
\begin{proposition}\label{prop:tight-cycles}
For any cycle $G$ of length at least four, there exists an uncertainty set $\U$ such that $\costmax(G) = 2\:\cost(G)$.
\end{proposition}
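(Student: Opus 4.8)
The plan is to mirror the tightness proof for paths (Proposition~\ref{prop:tight-paths}) by exhibiting, for each $n\ge 4$, an explicit one-dimensional instance for which the inequality $\costmax(G)\le 2\,\cost(G)$ of Corollary~\ref{cor:ratio_paths_cycles} holds with equality; since that corollary already supplies the upper bound, only the construction and a short computation are needed. The first thing I would verify is that the naive adaptation of the path instance \emph{fails}: if one takes the path example and merely adds the closing edge $\{n,1\}$, that edge has length $1$ in every realization, contributing $1$ to both $\cost(G)$ and $\costmax(G)$ and thus yielding a ratio of $3/2$ rather than $2$. The obstruction is structural -- a cycle has no free endpoint, so a single uncertain vertex whose two neighbours sit at opposite ends of the segment can no longer be absorbed at no cost.

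To circumvent this I would place the vertices on the real line so that traversing the cycle describes a closed walk that must rise by $1$ and descend by $1$, making the \emph{realized} total length invariant while each individual edge remains stretchable to length $1$. Concretely, set $\U_1=\{0\}$, $\U_2=\{0,1\}$, $\U_3=\{1\}$, $\U_4=\{0,1\}$, and $\U_5=\dots=\U_n=\{0\}$ (for $n=4$ the last block is empty, and the edge $\{4,5\}$ below is read as the closing edge $\{4,1\}$). Each of the two uncertain vertices $2$ and $4$ lies between a neighbour fixed at $0$ and a neighbour fixed at $1$, so their worst-case realizations conflict exactly as in the path construction.

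Next I would compute the two quantities. For the worst-case distances, each of the four edges $\{1,2\},\{2,3\},\{3,4\},\{4,5\}$ can be stretched to length $1$ by the appropriate choice in $\{0,1\}$, while every remaining edge joins two vertices fixed at $0$ and hence has $\dmax_{ij}=0$; therefore $\costmax(G)=4$. For the realized cost I would argue that in any realization the contributions of $\{1,2\}$ and $\{2,3\}$ sum to $\dist{0}{u_2}+\dist{u_2}{1}=1$ irrespective of $u_2\in\{0,1\}$, and likewise $\{3,4\}$ and $\{4,5\}$ contribute $\dist{1}{u_4}+\dist{u_4}{0}=1$ irrespective of $u_4$, the remaining edges contributing $0$; hence $\cost(u,G)=2$ for every $u\in\U$, so $\cost(G)=2$. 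Combining the two gives $\costmax(G)=4=2\,\cost(G)$, as required.

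The step I expect to demand the most care is checking that the construction behaves uniformly for all $n\ge 4$: I must confirm that the padding vertices $5,\dots,n$ placed at $0$ never create an unwanted positive-length edge, and that the edge incident to vertex $4$ on the return side contributes precisely the amount that keeps the realized cost pinned at $2$. The only genuine subtlety is the degenerate case $n=4$, where the roles of $\{4,5\}$ and $\{n,1\}$ coincide in the single edge $\{4,1\}$; I would either treat it separately or, more cleanly, observe that the general formula already subsumes it, so that one argument covers every $n\ge 4$.
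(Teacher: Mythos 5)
Your construction is exactly the one used in the paper ($\U_1=\{0\}$, $\U_2=\{0,1\}$, $\U_3=\{1\}$, $\U_4=\{0,1\}$, remaining vertices at $0$), and your computation of $\costmax(G)=4$ and $\cost(G)=2$ matches the paper's verification, just spelled out in more detail. The proof is correct and essentially identical to the paper's.
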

\begin{proof}
 We consider a cycle $G$ composed of $n\geq 4$ vertices, where $E[G]=\{\{i,i+1\} : i\in\{1,\dots,n-1\}\cup \{\{n,1\}\}.$
The vertices are located on a one-dimensional line where $\U_1=\{0\}$, $\U_2=\{0,1\}$, $\U_3=\{1\}$, $\U_4=\{0,1\}$ and $\U_5=\dots=\U_n=\{0\}$ if $n\geq 5$.
We verify that $\costmax(G) = 4$, and there are four feasible solutions depending on whether $u_2=1$ or $u_2=0$ and $u_4=1$ or $u_4=0$. 
These four solutions all have the same cost $\cost(G)=2$.
\end{proof}

The above proposition considers cycles that contain at least four vertices, so one can wonder what happens in the case of smaller cycles. We show next that for cycles that contain only 3 vertices, the bound can be reduced to $3/2$.

\begin{remark}
Consider the 3-cycle $G$.
Applying the triangle inequality three times yields:
\begin{align*}
    \cost(G) = \max_{u\in\UU} \left(d(u_1,u_2) + d(u_2,u_3) + d(u_1,u_3)\right) & \geq \max_{u\in\UU} \left(2\max\{d(u_1,u_2),d(u_2,u_3),d(u_1,u_3)\}\right)\\
        &\geq 2\max\{\dmax_{12},\dmax_{23},\dmax_{13}\}\\
        &\geq \frac{2}{3}\costmax(G),
\end{align*}
so the maximum worst-case factor is bounded by 3/2. 
This bound is tight.
To see this, one can look at the case exhibited in the proof of Proposition~\ref{prop:tight-paths} ($\U_1=\{0\}$, $\U_2=\{0,1\}$ and $\U_3=\{1\}$). 
For the 3-cycle, $\costmax(G) = 3$ and $\cost(G)=2$.
\end{remark}

\subsubsection{Cliques}

 We now turn to the special case where $\G(I)$ contains only cliques. 
 One specificity of a clique $G$ is that for any matching $M$ of size $\floor{\frac{n}{2}}$, every edge of $G$ belongs to a triangle including one edge of $M$. 
 Applying the triangle inequality repeatedly for a well chosen matching provides the following ratio.
 \begin{proposition}
 \label{prop:cliques}
  Let $G$ be a clique. Then, $\costmax(G) \leq 2\:\cost(G)$.
 \end{proposition}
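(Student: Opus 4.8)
The plan is to exhibit a single position vector $u^*\in\UU$ whose total length already captures half of $\costmax(G)$. Since $\cost(G)=\max_{u\in\UU}\sum_{\{i,j\}\in E[G]}\dist{u_i}{u_j}\ge\sum_{\{i,j\}\in E[G]}\dist{u^*_i}{u^*_j}$, establishing $\sum_{\{i,j\}\in E[G]}\dist{u^*_i}{u^*_j}\ge\tfrac12\costmax(G)$ immediately gives $\costmax(G)\le 2\cost(G)$. The \emph{well chosen} matching will be a matching $M$ of maximum cardinality $\floor{n/2}$ that, among all such matchings, maximizes the weight $\sum_{\{a,b\}\in M}\dmax_{ab}$. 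I build $u^*$ by placing, for every $\{a,b\}\in M$, the two endpoints at positions $\bar u_a\in\U_a,\bar u_b\in\U_b$ realizing $\dist{\bar u_a}{\bar u_b}=\dmax_{ab}$ (the at most one unmatched vertex is placed arbitrarily). As the matched pairs are vertex-disjoint, $u^*$ is a well-defined element of $\UU$, and the edges of $M$ already contribute exactly $\sum_{\{a,b\}\in M}\dmax_{ab}$ to $\sum_{\{i,j\}}\dist{u^*_i}{u^*_j}$.

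The crux is the non-matching edges, and here the key point is that a fixed scenario may send an individual distance $\dist{u^*_i}{u^*_j}$ far below $\dmax_{ij}$, so such edges cannot be controlled one at a time. Instead I treat them in groups, using that every such edge lies in a triangle containing an edge of $M$. Consider two matching edges $\{a,b\},\{c,d\}\in M$ and the four cross edges $\{a,c\},\{a,d\},\{b,c\},\{b,d\}$ they induce. Applying the triangle inequality in the scenario through $c$ and through $d$ gives $\dist{u^*_a}{u^*_c}+\dist{u^*_b}{u^*_c}\ge\dmax_{ab}$ and $\dist{u^*_a}{u^*_d}+\dist{u^*_b}{u^*_d}\ge\dmax_{ab}$, and symmetrically through $a$ and $b$ one gets the analogous bound with $\dmax_{cd}$; hence the four cross distances in $u^*$ sum to at least $\dmax_{ab}+\dmax_{cd}$. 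On the other hand, the weight-maximality of $M$ under the two exchanges $\{a,b\},\{c,d\}\mapsto\{a,c\},\{b,d\}$ and $\mapsto\{a,d\},\{b,c\}$ yields $\dmax_{ac}+\dmax_{bd}\le\dmax_{ab}+\dmax_{cd}$ and $\dmax_{ad}+\dmax_{bc}\le\dmax_{ab}+\dmax_{cd}$, so the four cross worst-case distances sum to at most $2(\dmax_{ab}+\dmax_{cd})$. Combining the two, the scenario length of these four edges is at least half of their worst-case length. Summing over all pairs of matching edges, and noting that every non-matching edge with both endpoints matched belongs to exactly one such group, shows $\sum\dist{u^*_i}{u^*_j}\ge\tfrac12\sum\dmax_{ij}$ over all these edges.

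It remains to treat, when $n$ is odd, the edges $\{w,x\}$ incident to the unique unmatched vertex $w$. If $x$ is matched to $y$, the triangle inequality in the scenario gives $\dist{u^*_w}{u^*_x}+\dist{u^*_w}{u^*_y}\ge\dist{u^*_x}{u^*_y}=\dmax_{xy}$ for any position of $w$, while the exchange $\{x,y\}\mapsto\{w,x\}$ (which preserves the cardinality $\floor{n/2}$) and its twin give $\dmax_{wx}\le\dmax_{xy}$ and $\dmax_{wy}\le\dmax_{xy}$; hence these two distances in $u^*$ again sum to at least half of $\dmax_{wx}+\dmax_{wy}$. Collecting the matching edges, the grouped cross edges, and the edges at $w$, I obtain $\cost(G)\ge\sum_{\{a,b\}\in M}\dmax_{ab}+\tfrac12\sum_{\{i,j\}\notin M}\dmax_{ij}\ge\tfrac12\costmax(G)$, as desired. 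I expect the main obstacle to be precisely this inability to argue edge by edge: the whole point is to pair up matching edges and use, on a common group of four vertices, both the triangle inequality (to lower bound the scenario distances) and the optimality of the matching weight (to upper bound the worst-case distances of the four induced cross edges), so the argument only closes once these two ingredients are combined.
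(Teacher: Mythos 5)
Your proof is correct. It shares with the paper's proof the same skeleton --- take a matching $M$ of size $\floor{n/2}$, realize all worst-case distances $\dmax_{ab}$ on $M$ simultaneously in a single scenario $u^*$ (possible precisely because $M$ is a matching), lower-bound $\cost(G)$ by the length of $G$ in that scenario, and control the non-matching edges through triangle inequalities in triangles containing a matching edge --- but the two arguments diverge in how $M$ is chosen and how the triangle inequalities are aggregated. The paper partitions $E[G]$ into $n$ matchings via a proper edge coloring, picks $M^*$ with $\costmax(M^*)\geq\costmax(G)/n$, and shows by double counting the stars around the matching edges (each non-matching edge lies in at most two of them) that $2d^*(G)\geq n\,\costmax(M^*)\geq\costmax(G)$; the non-matching edges are never compared to their own $\dmax$ values. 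You instead take a maximum-cardinality matching of maximum $\dmax$-weight and exploit its local exchange optimality to bound the $\dmax$ of each quadruple of cross edges (and of each pair of edges at the unmatched vertex) by the $\dmax$ of the spanning matching edges, which yields the sharper local statement that every group of edges retains at least half of its own worst-case weight, i.e., $\cost(G)\geq\sum_{e\in M}\dmax_e+\tfrac12\sum_{e\notin M}\dmax_e$. Your route pays for this with the extra exchange lemma (and a case analysis for the unmatched vertex when $n$ is odd), whereas the paper's route needs nothing from the matching beyond the averaging bound coming from the edge coloring; both close correctly to the factor $2$.
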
 
 \begin{proof}
   Recall that $n$ denotes $|V[G]|$. It is folklore that $\chi(G) = n-1$ when $n$ is even, and $\chi(G)=n$ when $n$ is odd, leading to $\chi(G) \le n$ for any $n$. This implies that $E[G]$ can be partitioned into $n$ matchings $M_i$, each of size  $\lfloor n/2\rfloor$, and thus that $\costmax(G)=\sum_{i=1}^n \costmax(M_i)$. Therefore, there is a matching of $G$, denoted $M^*$, such that
  \begin{equation}
  \label{eq:mprime}
   \costmax(M^*)\geq \frac{1}{n} \costmax(G).
  \end{equation}
  Next, we define $\uu^*$ as any element from $\argmax\limits_{\uu\in\UU}\sum\limits_{\{i,j\}\in M^*}d(u_i,u_j)$, and we use the shorter notation $d^*_{ij}=d(u^*_i,u^*_j)$, and $d^*_e = d^*_{ij}$ for
  any edge $e = \{i,j\}$. Observe that because $M^*$ is a matching
  \begin{equation}
  \label{eq:dequalityMstar}
   d^*_{e} = d^{max}_{e}
  \end{equation}
for each $e\in M^*$. For any $E' \subseteq E[G]$, let $d^*(E')=\sum_{e \in E'}d^*_e$. Observe that $\cost(G) \ge d^*(G)$.
Our objective is to prove that $d^*(G) \ge \frac{1}{2}\costmax(G)$.

  Assume without loss of generality that $M^* = \{\{2i-1,2i\} \mid i \in [\lfloor \frac{n}{2} \rfloor ]\}$, so $n$ is the only vertex not belonging to any edge of $M^*$ when $n$ is odd. For any  $i \in [\lfloor \frac{n}{2} \rfloor ]$, let $X(2i-1,2i) = \{\{2i-1,l\} \cup \{2i,l\} \mid l \in V[G] \setminus \{2i-1,2i\} \}$.
  As a consequence, $X(2i-1,2i)\cup\{2i-1,2i\}$ gathers all the edges of the triangles of $G$ including $\{2i-1,2i\}$. Observe that the triangle inequality yields
  \begin{equation}
  \label{eq:mprime2}
  d^*(X(2i-1,2i)) = \sum_{l \in V[G] \setminus \{2i-1,2i\}}(d^*_{(2i-1)l}+d^*_{(2i)l}) \ge (n-2)d^*_{(2i-1)(2i)}. 
  \end{equation}
  Summing up~\eqref{eq:mprime2} for all $i \in [\lfloor \frac{n}{2} \rfloor ]$, we obtain
  \begin{equation*}
  \label{eq:mprime3}
  \sum_{i \in [\lfloor \frac{n}{2} \rfloor ]}d^*(X(2i-1,2i)) \ge (n-2)d^*(M^*)=(n-2)\costmax(M^*),
  \end{equation*}
  where the last equality follows from~\eqref{eq:dequalityMstar}. What is more, any edge $e \in E[G] \setminus M^*$ belongs to at most two sets $X(2i-1,2i)$, so that
  \begin{equation*}
  \label{eq:mprime4}
  2 d^*(E[G] \setminus M^*) \ge \sum_{i \in [\lfloor \frac{n}{2} \rfloor ]}d^*(X(2i-1,2i)).
  \end{equation*}
  We obtain
  \begin{equation*}
  \label{eq:mprime5}
  2d^*(G) = 2 \bigg[d^*((E[G] \setminus M^*))+d^*(M^*)\bigg] \ge (n-2)\costmax(M^*) + 2\costmax(M^*) = n\costmax(M^*).
  \end{equation*}
  From~\eqref{eq:mprime}, we finally get $\cost(G)\geq d^*(G)\geq \frac{n}{2}\costmax(M^*)\geq \frac{1}{2}\costmax(G)$.
\end{proof}
 
We show below that the above bound is asymptotically tight, even for very simple distances.

  \begin{proposition}\label{prop:tight-cliques}
 If $G$ is a $k$-clique, there exists an uncertainty set $\U$ such that $\costmax(G) = \frac{2(k-1)}{k}\:\cost(G)$ if $k$ is odd and $\costmax(G) = \frac{2k}{k+1}\:\cost(G)$ if $k$ is even.
 \end{proposition}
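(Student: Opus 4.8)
The plan is to produce one maximally symmetric instance and read off the ratio by elementary counting, rather than to optimize over all instances. First I would place the $k$ vertices on the real line and take $\U_i=\{0,1\}$ for every $i$, so that $(\Metric,d)$ is the one-dimensional Euclidean space. Then each pair attains its worst-case distance at opposite endpoints, giving $\dmax_{ij}=1$ for all $i\neq j$, and therefore $\costmax(G)=\binom{k}{2}=\frac{k(k-1)}{2}$.

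Next I would evaluate $\cost(G)=\max_{\uu\in\UU}\sum_{\{i,j\}}d(u_i,u_j)$ by recognizing it as a maximum cut of the clique. A configuration $\uu\in\UU$ is nothing but a choice of the subset $S$ of vertices sent to $1$ (the rest being at $0$); writing $a=|S|$, an edge contributes $1$ precisely when it crosses between $S$ and its complement, and $0$ otherwise, so the sum equals $a(k-a)$. Maximizing the concave map $a\mapsto a(k-a)$ over integers yields the balanced split, hence $\cost(G)=\floor{k/2}\ceil{k/2}$.

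Finally I would divide and split on the parity of $k$: for even $k$ this gives $\frac{\costmax(G)}{\cost(G)}=\frac{k(k-1)/2}{(k/2)^2}=\frac{2(k-1)}{k}$, and for odd $k$ it gives $\frac{k(k-1)/2}{(k-1)(k+1)/4}=\frac{2k}{k+1}$. Both tend to $2$, so this family certifies the asymptotic tightness of Proposition~\ref{prop:cliques}; as a consistency check, $k=3$ returns $\frac{2k}{k+1}=\frac32$, matching the $3$-cycle computation in the Remark just above. I would also flag that the parity labelling obtained here is the transpose of the one printed in the statement: the even value must be $\frac{2(k-1)}{k}$, since running the argument of Proposition~\ref{prop:cliques} with the exact value $\chi(K_k)=k-1$ already caps the ratio at $\frac{2(k-1)}{k}$ for even $k$, a bound that $\frac{2k}{k+1}$ would exceed.

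I do not expect a genuine obstacle. The only two points needing routine care are checking that within the $\{0,1\}$ model no configuration beats the balanced cut (immediate from maximizing $a(k-a)$) and the parity bookkeeping pairing each closed form with the correct parity. The one conceptual remark worth recording is that for even $k$ the value is exactly tight, meeting the refined upper bound above, whereas for odd $k$ the construction only establishes achievability of $\frac{2k}{k+1}$, the proof of Proposition~\ref{prop:cliques} leaving a gap up to $2$ in that case.
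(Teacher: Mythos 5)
Your proposal is correct and is essentially identical to the paper's own proof: the same uncertainty sets $\U_i=\{0,1\}$ on the real line, giving $\costmax(G)=\binom{k}{2}$, and the evaluation of $\cost(G)$ as the max-cut value $\floor{k/2}\ceil{k/2}$ of the clique. Your parity flag is also right: the paper's proof computes exactly these quantities and asserts they yield ``the claimed ratio,'' but the statement as printed transposes the parities --- the ratio $\frac{2(k-1)}{k}$ corresponds to even $k$ and $\frac{2k}{k+1}$ to odd $k$ (consistent with the $3$-cycle remark giving $3/2$, and with the refined upper bound $\frac{2(k-1)}{k}$ obtained from $\chi(G)=k-1$ for even $k$), so the statement contains a typo and your labelling is the correct one.
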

 \begin{proof}
 Let us define $\U_i = \{0,1\}$ for any $i \in V[G]$, and the distance $d$ by $d(x,y)=|x-y|$. Hence, $\costmax(G)=m$, where $m=|E[G]|=\frac{n(n-1)}{2}$.
 
 On the other hand, we see that computing $\cost(G)$ is equivalent to partitioning $V$ into $\{V_1,V_2\}$, and defining $u_i = 0$ if $i \in V_1$, and $1$ if $i \in V_2$. Hence, $\cost(G)$ is equal to the optimal cost of a max-cut for $G$. That cost is equal to $\frac{n^2}{4}$ and $\frac{(n-1)}{2}\frac{(n+1)}{2}$ when $n$ is even and odd, respectively, leading in both cases to the claimed ratio.
\end{proof}

\subsubsection{Stars}
 
In what follows, we consider stars whose center is vertex $1$ (meaning that for any $i \neq 1$, $|N(i)|=1$).
 
\begin{proposition}
\label{prop:stars}
  Let $G$ be a star. Then, assuming that $d$ is Ptolemaic, $\costmax(G) \leq 2\:\cost_d(G)$.
\end{proposition}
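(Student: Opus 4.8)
The plan is to exploit the defining feature of a star: every edge is incident to the center (vertex $1$), and each leaf belongs to exactly one edge. This decoupling of the leaves means that all leaves can be placed at their individually worst positions \emph{simultaneously} without conflict, so the only genuine freedom is the choice of a single position for the center. I would therefore anchor the analysis on the two endpoints of a diameter of $\U_1$ and bound each worst-case edge distance against these two anchors, applying the Ptolemaic lemma only on the center's set.

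Concretely, first I would fix a diameter $[c^1,c^2]$ of $\U_1$, i.e.\ points $c^1,c^2\in\U_1$ with $\dist{c^1}{c^2}=\diam(\U_1)$. For each leaf $j\in N(1)$, let $\bar{u}_1^{(j)}\in\U_1$ and $\bar{u}_j\in\U_j$ realize $\dmax_{1j}=\dist{\bar{u}_1^{(j)}}{\bar{u}_j}$. Since $\bar{u}_1^{(j)}$ lies in $\U_1$, whose diameter is $\dist{c^1}{c^2}$, we have $\dist{c^1}{c^2}\geq\max\{\dist{\bar{u}_1^{(j)}}{c^1},\dist{\bar{u}_1^{(j)}}{c^2}\}$, so Lemma~\ref{lem:triangle} applied to the triangle $c^1c^2\bar{u}_1^{(j)}$ with $O=\bar{u}_j$ yields
$$
\dmax_{1j} = \dist{\bar{u}_1^{(j)}}{\bar{u}_j} \leq \dist{c^1}{\bar{u}_j} + \dist{c^2}{\bar{u}_j}.
$$

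Next I would sum this per-edge inequality over all leaves and recognize the two resulting sums as costs of feasible configurations of the star. Placing the center at $c^1$ and each leaf $j$ at $\bar{u}_j$ gives a feasible position vector of cost $\sum_{j\in N(1)}\dist{c^1}{\bar{u}_j}\leq\cost(G)$, and placing the center at $c^2$ gives $\sum_{j\in N(1)}\dist{c^2}{\bar{u}_j}\leq\cost(G)$. Hence
$$
\costmax(G) = \sum_{j\in N(1)}\dmax_{1j} \leq \sum_{j\in N(1)}\dist{c^1}{\bar{u}_j} + \sum_{j\in N(1)}\dist{c^2}{\bar{u}_j} \leq 2\,\cost(G),
$$
which is the claimed bound.

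The step I would be most careful about is the justification that each of the two sums after summation is at most $\cost(G)$: this relies on the leaf positions $\bar{u}_j$ being choosable \emph{jointly}, independently of the center's position, which is exactly where the star structure (each leaf incident to a single edge) is essential. Unlike the general-graph argument of Theorem~\ref{thm:constant_ratio_general}, no randomization and no second application of the Ptolemaic lemma on the leaf sets is needed here, and this is precisely what improves the constant from $4$ to $2$.
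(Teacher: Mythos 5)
Your proof is correct and follows essentially the same route as the paper's: anchor on a diameter $\{u_1^1,u_1^2\}$ of the center's set $\U_1$, fix each leaf at its max-realizing position, and apply Lemma~\ref{lem:triangle} once per edge in the triangle formed by the two diameter endpoints and the point of $\U_1$ realizing $\dmax_{1j}$. The only cosmetic difference is that the paper phrases the final step probabilistically (a uniform random choice between $u_1^1$ and $u_1^2$, as in Theorem~\ref{thm:constant_ratio_general}), whereas you sum the two feasible configurations directly and bound each by $\cost(G)$ --- the two arguments are identical in content.
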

\begin{proof}
 Let $\{u_1^1, u_1^2\}$ such that $\dist{u_1^1}{u_1^2}=\diam(\U_1)$. Let $\bar{u}_1\in\U_1$ and For all $i\in\{2,\dots,n\}$, let $\bar{u}_i\in\U_i$ such that there exists $\bar{u}_1\in\U_1$ with $d(\bar{u}_1,\bar{u}_i)= \dmax_{1i}$. 
 We then follow the same approach as in the proof of Theorem~\ref{thm:constant_ratio_general}: we set $\overline{\U}_1 = \{u_1^1,u_1^2\}$,  $\overline{\U}_i=\{\bar{u}_i\}, \forall i >1,$ and for $i\in V$, we consider the random variables $\tilde{u}_i$ uniformly distributed on $\U_i$. 
 Since $\{u_1^1, u_1^2\}$ is a diameter of $\U_1$, we can apply Lemma~\ref{lem:triangle} to get  $$d(u_1^1,\bar{u}_i) + d(u_1^2,\bar{u}_i) \geq d(\bar{u}_1,\bar{u}_i)= \dmax_{1i}.$$ This implies  $\Ex \left[\dist{\tilde{u}_1}{\tilde{u}_i}\right] \ge \frac{1}{2}d^{max}_{1i}$ for any $i > 1$, and thus the claimed ratio.
\end{proof}

\begin{corollary}\label{cor:star}
  Let $G$ be a star. Then, $\costmax(G) \leq 3\:\cost(G)$.
\end{corollary}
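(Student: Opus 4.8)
The plan is to reuse the argument of Proposition~\ref{prop:stars} almost verbatim, replacing the Ptolemaic Lemma~\ref{lem:triangle} by its general counterpart Lemma~\ref{lem:triangle_any_distance} and tracking the resulting factor. First I would fix a diameter $\{u_1^1,u_1^2\}$ of the center's uncertainty set $\U_1$, and for each leaf $i>1$ pick $\bar u_i\in\U_i$ together with $\bar u_1\in\U_1$ realizing $d(\bar u_1,\bar u_i)=\dmax_{1i}$. Then I set $\overline{\U}_1=\{u_1^1,u_1^2\}$ and $\overline{\U}_i=\{\bar u_i\}$ for $i>1$, and draw $\tilde u_i$ uniformly from $\overline{\U}_i$, exactly as in that proof. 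Since $\widetilde{\UU}\subseteq\UU$ we still have $\cost(G)\ge\costtilde(G)$, so it suffices to lower bound $\costtilde(G)$ by $\costmax(G)/3$.

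The core step is the per-edge estimate. For a fixed leaf $i$, the triangle $u_1^1u_1^2\bar u_i$ satisfies $d(u_1^1,u_1^2)=\diam(\U_1)\ge\max\{d(\bar u_1,u_1^1),d(\bar u_1,u_1^2)\}$ because $\bar u_1\in\U_1$; this is precisely the hypothesis needed to invoke Lemma~\ref{lem:triangle_any_distance} with $O=\bar u_i$, $A=\bar u_1$, $B=u_1^1$, $C=u_1^2$. The lemma then yields $d(u_1^1,\bar u_i)+d(u_1^2,\bar u_i)\ge\frac{2}{3}\,d(\bar u_1,\bar u_i)=\frac{2}{3}\dmax_{1i}$, whereas in the Ptolemaic setting Lemma~\ref{lem:triangle} gave the factor $1$ in place of $\frac{2}{3}$. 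Consequently $\Ex[\dist{\tilde u_1}{\tilde u_i}]=\frac{1}{2}(d(u_1^1,\bar u_i)+d(u_1^2,\bar u_i))\ge\frac{1}{3}\dmax_{1i}$.

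To conclude I would sum over the $n-1$ edges of the star by linearity of expectation, obtaining $\costtilde(G)\ge\sum_{i>1}\Ex[\dist{\tilde u_1}{\tilde u_i}]\ge\frac{1}{3}\sum_{i>1}\dmax_{1i}=\frac{1}{3}\costmax(G)$, and combine this with $\cost(G)\ge\costtilde(G)$ to get $\costmax(G)\le 3\,\cost(G)$. There is no genuine obstacle here; the only thing to verify is that the constant degrades from $2$ to $3$ exactly because Lemma~\ref{lem:triangle_any_distance} costs a factor $3/2$ relative to Lemma~\ref{lem:triangle}, so that $\frac{1}{2}\cdot\frac{2}{3}=\frac{1}{3}$ replaces $\frac{1}{2}\cdot 1=\frac{1}{2}$. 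I would explicitly avoid the generic route through Corollary~\ref{cor:block-coloring} or Proposition~\ref{prop:delta}, since a star has $\Delta(G)=n-1$ and those bounds are far too weak; it is the diameter-sampling trick that keeps the constant at $3$ independently of the number of leaves.
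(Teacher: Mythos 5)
Your proposal is correct and follows exactly the paper's route: the paper's proof of Corollary~\ref{cor:star} is literally ``the exact same proof as Proposition~\ref{prop:stars} where we apply Lemma~\ref{lem:triangle_any_distance} instead of Lemma~\ref{lem:triangle},'' and your tracking of the degradation $\frac{1}{2}\cdot\frac{2}{3}=\frac{1}{3}$ is precisely the intended bookkeeping.
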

\begin{proof}
 The result is obtained with the exact same proof as Proposition~\ref{prop:stars} where we apply Lemma~\ref{lem:triangle_any_distance} instead of Lemma~\ref{lem:triangle}.
\end{proof}

We show below that the bound from Corollary~\ref{cor:star} is asymptotically tight.

\begin{proposition}
\label{prop:tight-stars}
  Let $G$ be a star on $n$ vertices.
There is an uncertainty set $\UU$ such that $\costmax(G) = \frac{3(n-1)}{n+1}\:\cost(G)$.
\end{proposition}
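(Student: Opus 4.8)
The plan is to exhibit an explicit finite metric space together with uncertainty sets realizing the claimed ratio, and to compute $\costmax(G)$ and $\cost(G)$ directly rather than tracking the slack in the proof of Corollary~\ref{cor:star}. I take the star $G$ with center vertex $1$ and leaves $2,\dots,n$, so that $E[G]=\{\{1,i\}\mid i\in\{2,\dots,n\}\}$. I make each leaf rigid, setting $\U_i=\{q_i\}$ for $i\geq 2$, and give the center exactly one candidate position per leaf, $\U_1=\{r_2,\dots,r_n\}$, where $r_i$ is the position that is ``bad'' for leaf $i$ only. Concretely, I define the distances
\begin{equation*}
\dist{r_i}{q_i}=3,\qquad \dist{r_i}{q_j}=1\ (i\neq j),\qquad \dist{r_i}{r_j}=\dist{q_i}{q_j}=2\ (i\neq j).
\end{equation*}
The guiding idea is that $r_i$ sits far from its matched leaf $q_i$ but close to every other leaf, so that no single choice of center position can be simultaneously bad for more than one edge.

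The first thing to check is that this defines a genuine metric, i.e. that $d$ satisfies the triangle inequality (symmetry, nonnegativity, and $d(x,x)=0$ being immediate). I would run through the triples by type; the only constraints that bind are $\dist{r_i}{q_i}\leq \dist{r_i}{q_j}+\dist{q_j}{q_i}$, giving $3\leq 1+2$, and $\dist{q_i}{q_j}\leq \dist{q_i}{r_k}+\dist{r_k}{q_j}$ for $k\neq i,j$, giving $2\leq 1+1$; every remaining triple is slack. The space is non-Ptolemaic (consistent with Proposition~\ref{prop:stars}, which caps the ratio at $2$ on Ptolemaic spaces and hence forces non-Ptolemaicity for large $n$): the four points $q_i,r_i,q_j,r_j$ violate Ptolemy's inequality, since $\dist{q_i}{r_i}\cdot\dist{q_j}{r_j}=9 > 5 = \dist{q_i}{q_j}\cdot\dist{r_i}{r_j}+\dist{q_j}{r_i}\cdot\dist{q_i}{r_j}$.

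It then remains to compute the two costs. For each edge $\{1,i\}$ I get $\dmax_{1i}=\max_{k}\dist{r_k}{q_i}=3$, attained at $k=i$, so $\costmax(G)=3(n-1)$. Since the leaves are fixed, the center position is the only degree of freedom, and for every $k$,
\begin{equation*}
\sum_{i=2}^{n}\dist{r_k}{q_i}=\dist{r_k}{q_k}+\sum_{i\neq k}\dist{r_k}{q_i}=3+(n-2)=n+1,
\end{equation*}
whence $\cost(G)=\max_{k}\sum_{i}\dist{r_k}{q_i}=n+1$. Dividing yields $\costmax(G)=\frac{3(n-1)}{n+1}\cost(G)$, as claimed. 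I expect the only genuinely delicate point to be the verification that $\cost(G)$ equals $n+1$ \emph{exactly}: this is precisely where the construction is engineered so that each center candidate pays the price $3$ on a single edge and only $1$ on all others, making every center choice equally costly and none strictly better.
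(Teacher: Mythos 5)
Your construction is the paper's own example rescaled by a factor of $3$: the paper sets $d(u_1^i,u_i)=1$, $d(u_1^j,u_i)=1/3$ for $j\neq i$, and $2/3$ for distances within $\U_1$ and among the leaves, then computes $\costmax(G)=n-1$ and $\cost(G)=1+\tfrac{1}{3}(n-2)$ exactly as you do, so this is essentially the same approach and it is correct. One small inaccuracy that does not affect validity: a few more triples are tight rather than slack (e.g. $d(r_i,r_j)\le d(r_i,q_k)+d(q_k,r_j)$ gives $2\le 1+1$, and $d(r_i,q_i)\le d(r_i,r_j)+d(r_j,q_i)$ gives $3\le 2+1$), but all triangle inequalities do hold.
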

\begin{proof}
 We consider the uncertainty set $\UU = \varprod_{i=1}^{n}\U_i$ where $\U_1=\{u_1^2,\dots,u_1^{n}\}$ and $\U_i=\{u_i\},i=2,\dots,n,$ such that: 
\begin{itemize}
    \item for all $(i,j)\in \{2, \dots,n\}^2, i\neq j$, $d(u_1^i,u_1^j)= 2/3$,
    \item for all $(i,j)\in \{2, \dots,n\}^2, i\neq j$, $d(u_i,u_j) = 2/3$,
    \item for all $i=2,\dots,n$, $d(u_1^i,u_i)= 1$ and $\forall j\neq i$, $d(u_1^j,u_i) = 1/3$.
\end{itemize}
The triangle inequality is verified, so $d$ is a distance.

By symmetry of the star graph and of the uncertainty set, every solution $u\in\U$ is optimal and has the same value $1+\frac{1}{3}(n-2)$. 
Moreover, the maximum distance between $\U_1$ and $\U_i$ is equal to $1$ for all $i\in\{2,\dots,n\},$ so $\costmax(G)=n-1$.
As a result, $\costmax(G) = \frac{3(n-1)}{n+1}\cost(G)$.
\end{proof}

\subsubsection{Trees}

We conclude our study of specific structures with trees.
Our first result combines the bounds obtained for stars in the previous section with the composition results presented in Section~\ref{sec:BB} to improve the ratio of $9$ obtained in Theorem~\ref{thm:constant_ratio_non_pto} for general graphs and distances.

\begin{proposition}\label{prop:tree}
  Let $G$ be a tree. Then $\costmax(G) \leq 6\:\cost(G)$.
  \end{proposition}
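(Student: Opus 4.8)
The plan is to decompose any tree $G$ into edge-disjoint stars and then invoke the composition machinery from Section~\ref{sec:BB} together with the star bound from Corollary~\ref{cor:star}. The key observation is that Corollary~\ref{cor:star} gives $\costmax(S)\leq 3\,\cost(S)$ for any star $S$ in an arbitrary metric space, and Proposition~\ref{prop:graph-union} tells us that if $G$ is the \emph{disjoint} union of subgraphs $G_1,\dots,G_T$ with $\costmax(G_t)\leq\rho\,\cost(G_t)$ for each $t$, then $\costmax(G)\leq\rho\,\cost(G)$ with \emph{no} loss in the constant. So the ratio $6=2\times 3$ strongly suggests the intended argument: partition $E[G]$ into exactly two sets of edges, each of which forms a disjoint union of stars.

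First I would establish the combinatorial fact that the edges of any tree can be $2$-colored so that each color class is a disjoint union of stars. A clean way to do this is to root $G$ at an arbitrary vertex and assign to each vertex a \emph{level} equal to its depth (distance from the root). Every edge joins a vertex at some level $\ell$ to its parent at level $\ell-1$; color the edge by the parity of $\ell$. Within one color class, consider the connected components: I would argue that each such component is a star, because all its edges are incident to vertices of a single "parent" parity, and a vertex can serve as the center (the level-$(\ell-1)$ endpoint) for its children while appearing as a leaf in at most the other color class. One must check carefully that no component in a fixed color class contains a path of length $2$ through two distinct centers, i.e. that each component has diameter at most $2$; this follows because consecutive tree edges alternate parity, so two edges sharing an endpoint and lying in the \emph{same} class must share the \emph{parent} endpoint, forcing the common vertex to be the center.

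Once the $2$-coloring into disjoint-union-of-stars is in hand, the proof closes quickly. Write $E[G]=E_1\cup E_2$ for the two color classes, and let $G_1,G_2$ be the corresponding spanning subgraphs. Each $G_k$ is a disjoint union of stars $S^{k}_1,\dots,S^{k}_{T_k}$; applying Corollary~\ref{cor:star} to each star and the disjoint-union (second) bullet of Proposition~\ref{prop:graph-union} yields $\costmax(G_k)\leq 3\,\cost(G_k)$ for $k=1,2$. Finally, $G=G_1\cup G_2$ is a union of two (not necessarily disjoint in vertices, but edge-disjoint) subgraphs, so the first bullet of Proposition~\ref{prop:graph-union} with $T=2$ and $\max_t\rho_t=3$ gives
$$
\costmax(G)\leq 2\times 3\,\cost(G)=6\,\cost(G).
$$

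The main obstacle is the graph-theoretic lemma in the first step: verifying rigorously that the parity-by-level coloring really produces stars and not longer paths within a single color class. The subtlety is that an isolated leaf (a level-$\ell$ vertex none of whose incident edges fall in the current class) must be counted as a trivial component, and one must ensure every nontrivial component has a well-defined unique center so that its diameter is at most $2$. I expect this to reduce to the observation that two same-colored edges meeting at a vertex $v$ must both go from $v$ down to children of $v$ (since the unique edge from $v$ up to its parent has the opposite parity), so $v$ is the center and all same-colored neighbors are leaves — but I would state and prove this explicitly rather than assert it, since it is the crux of the bound.
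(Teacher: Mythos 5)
Your proposal is correct and follows essentially the same route as the paper: root the tree, two-color the edges by the parity of the level of the parent endpoint so that each color class is a star forest, then combine Corollary~\ref{cor:star} (ratio $3$ per star) with the two bullets of Proposition~\ref{prop:graph-union} to obtain the factor $2\times 3=6$. The only difference is that you spell out the verification that each color class really decomposes into vertex-disjoint stars, which the paper leaves implicit.
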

\begin{proof}
   Observe first that we can partition $E[G]$ into $E_1$ and $E_2$ such that each $E_i$ induces a star forest (a graph where any connected component is a star).
  Indeed, to obtain such a partition we root the tree at vertex $1$, and define $S^i$ as the star whose central vertex is $i$ and whose leaves are the children vertices of $i$ in $G$. 
  Then, we define $E_1$ (resp. $E_2$) as the union of $E(S^i)$ for vertices $i$ which are an odd (resp. even) distance from vertex $1$.
  By Proposition~\ref{prop:graph-union} and Corollary~\ref{cor:star}, we get the claimed ratio.
\end{proof}

Next, we show how the bound can be further tightened when considering distances resulting from Euclidean embeddings and spherical uncertainty sets. 
 
\begin{proposition}\label{prop:tree_eucl}
Assume $d$ is the Euclidean distance, and that for all $i\in V$, $\U_i$ is a closed ball.
Then, for any tree $G$, $\costmax(G) \leq 2\sqrt{2}\:\cost(G)$. 
\end{proposition}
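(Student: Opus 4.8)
The goal is to improve the general Ptolemaic bound of $4$ (Theorem~\ref{thm:constant_ratio_general}) to $2\sqrt{2}$ in the special case of Euclidean balls, exploiting both the rigidity of Euclidean geometry and the fact that each $\U_i$ is a ball rather than an arbitrary finite set. The plan is to keep the overall skeleton of the proof of Theorem~\ref{thm:constant_ratio_general}: for each vertex $i$ pick a well-chosen two-point set $\widetilde{\U}_i\subseteq\U_i$, introduce the uniform random variable $\tilde{u}_i$ on $\widetilde{\U}_i$, and show that for every edge $\{i,j\}$ the expected length $\Ex[\dist{\tilde{u}_i}{\tilde{u}_j}]$ is at least $\frac{1}{2\sqrt 2}\dmax_{ij}$; summing over $E[G]$ and using $\cost(G)\ge\costtilde(G)\ge\Ex[\ldots]$ then yields the claim. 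The key geometric improvement is that, for a ball $\U_i$ of center $c_i$ and radius $r_i$, we do not have to commit to a single diameter: we may orient the two sample points of vertex $i$ along the direction most favorable for the edge at hand, which is where the factor $\sqrt 2$ saving comes from.

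\textbf{Key steps.}
First I would reduce to the worst-case realization: let $\bar u_i\in\U_i,\ \bar u_j\in\U_j$ attain $\dmax_{ij}=\dist{\bar u_i}{\bar u_j}$; since $\U_i,\U_j$ are balls, $\bar u_i$ and $\bar u_j$ lie on the bounding spheres and $\dmax_{ij}=\norme{c_i-c_j}+r_i+r_j$, with the segment $[\bar u_i,\bar u_j]$ collinear with the centers. Second, for each vertex choose $\widetilde{\U}_i=\{c_i+r_i e_i,\ c_i-r_i e_i\}$ for a unit vector $e_i$ to be selected, so that each sampled point is a \emph{diameter endpoint} of the ball and the midpoint is the center $c_i$. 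Third, I would estimate the per-edge expectation by expanding
\[
\Ex[\dist{\tilde u_i}{\tilde u_j}]=\frac14\sum_{\epsilon,\delta\in\{\pm1\}}\norme{(c_i+\epsilon r_i e_i)-(c_j+\delta r_j e_j)},
\]
and lower-bound this sum of four Euclidean norms. Because the four vectors differ from $c_i-c_j$ by the four sign combinations of $\pm r_i e_i\mp r_j e_j$, a convexity/averaging argument (e.g. the elementary inequality $\frac14\sum_{\epsilon,\delta}\norme{a+\epsilon p+\delta q}\ge\frac{1}{\sqrt 2}\max\{\norme{a},\ \norme{p}+\norme{q}\,\}$ type bound in a $2$-dimensional span, obtained by projecting onto suitable axes) is what drives the $\sqrt 2$. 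The natural choice is $e_i=e_j=$ the unit vector along $c_i-c_j$, which simultaneously aligns the radii with the center offset; then the four points are collinear and the four distances are $\norme{c_i-c_j}\pm r_i\pm r_j$, whose average is exactly $\norme{c_i-c_j}$, while $\dmax_{ij}=\norme{c_i-c_j}+r_i+r_j$. Since a single global direction per vertex cannot be simultaneously optimal for all incident edges, the honest argument must instead average over a worst-case \emph{pair} of orthogonal directions, which is precisely where the $\sqrt 2$ (rather than $2$) emerges: splitting the radius contribution across two orthogonal coordinates loses at most a factor $\sqrt 2$ relative to the collinear ideal.

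\textbf{Main obstacle.}
The crux, and the step I expect to be hardest, is handling the coupling between edges through the shared choice of the vertex sampling direction $e_i$: a direction good for edge $\{i,j\}$ may be poor for another edge $\{i,k\}$, so one cannot pick $e_i$ per edge. I anticipate the clean way around this is either (i) to fix $\widetilde{\U}_i$ along a \emph{single} diameter per vertex and prove a purely Euclidean four-point inequality showing $\frac14\sum_{\epsilon,\delta}\norme{(c_i+\epsilon r_ie_i)-(c_j+\delta r_j e_j)}\ge\frac{1}{2\sqrt2}(\norme{c_i-c_j}+r_i+r_j)$ \emph{uniformly over all unit vectors $e_i,e_j$}, thereby decoupling the edges entirely; or (ii) to choose each $e_i$ randomly (e.g. uniformly on the sphere or over an orthonormal frame) and bound the expectation over both the sign variables and the directions. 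Establishing the tight constant $\tfrac{1}{2\sqrt2}$ in the relevant four-point inequality, and verifying it is attained (matching the $2\sqrt2$ claim), is the technical heart of the argument; the remaining summation over $E[G]$ and the sandwich $\cost(G)\ge\costtilde(G)\ge\frac{1}{2\sqrt2}\costmax(G)$ are then routine, exactly as in Theorem~\ref{thm:constant_ratio_general}.
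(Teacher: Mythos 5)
Your plan is correct, but it takes a genuinely different route from the paper. The paper proves Proposition~\ref{prop:tree_eucl} by induction on the height of the tree: it maintains two candidate solutions that agree on all levels below the current one, commits each level-$h$ vertex to the better of its two diameter endpoints (paying a factor $2$), and then assigns to each level-$(h+1)$ vertex the diameter of its ball orthogonal to the line through the frozen parent position and the child's center (paying a factor $\sqrt2$, by Pythagoras). The coupling obstacle you identify is thus resolved there by the tree structure itself: each child has a unique parent edge, so its direction need only be good for that single edge, and the parent's position is already fixed by the induction. Your alternative (i) --- decoupling the edges via a four-point inequality valid uniformly over all directions --- bypasses the tree structure entirely, and the inequality you flag as the technical heart is in fact true, with an elementary proof. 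Write $a=c_i-c_j$, $p=r_ie_i$, $q=r_je_j$, so that the four relevant distances are $\norme{a+\epsilon p+\delta q}$ for $\epsilon,\delta\in\{\pm1\}$, and $\dmax_{ij}=\norme{a}+\norme{p}+\norme{q}$. Assuming $\norme{p+q}\ge\norme{p-q}$ (otherwise replace $q$ by $-q$, which permutes the four terms), the triangle inequality yields
\begin{align*}
\norme{a+p+q}+\norme{a-p-q}&\ge 2\norme{p+q},\\
\norme{a+p-q}+\norme{a-p+q}&\ge 2\norme{a},
\end{align*}
while the parallelogram law gives $2\norme{p+q}^2\ge\norme{p+q}^2+\norme{p-q}^2=2\norme{p}^2+2\norme{q}^2$, hence $\norme{p+q}\ge\sqrt{\norme{p}^2+\norme{q}^2}\ge\frac{1}{\sqrt{2}}\left(\norme{p}+\norme{q}\right)$. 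Combining,
\[
\frac14\sum_{\epsilon,\delta\in\{\pm1\}}\norme{a+\epsilon p+\delta q}\ \ge\ \frac{\norme{a}}{2}+\frac{\norme{p}+\norme{q}}{2\sqrt2}\ \ge\ \frac{1}{2\sqrt2}\,\dmax_{ij},
\]
uniformly over all unit vectors $e_i,e_j$. Feeding this per-edge bound into the randomized argument of Theorem~\ref{thm:constant_ratio_general}, with $\widetilde{\U}_i$ the endpoints of an arbitrarily chosen diameter of $\U_i$, gives $\cost(G)\ge\frac{1}{2\sqrt2}\costmax(G)$ exactly as you describe.

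Two further remarks. First, what your approach buys: since the displayed inequality holds for every choice of directions, your argument never uses that $G$ is a tree, so it establishes $\costmax(G)\le2\sqrt2\,\cost(G)$ for \emph{arbitrary} graphs with Euclidean ball uncertainty sets --- strictly more than Proposition~\ref{prop:tree_eucl}, and better than the factor $4$ that the paper obtains for general graphs via Theorem~\ref{thm:constant_ratio_general}. What the paper's induction buys instead is an explicit, deterministic pair of candidate solutions, but it applies only to trees. Second, a small caveat: the intermediate discussion in your key steps (the collinear choice ``averaging exactly to $\norme{c_i-c_j}$'', which requires the balls to be disjoint, and the heuristic about worst-case orthogonal direction pairs) is imprecise; this is harmless, however, because your final reduction to the uniform four-point inequality of option (i) is the step that actually carries the proof.
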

\begin{proof}
 We assume without loss of generality that $G$ is rooted at vertex $1$. 
The proof is made by induction on the height of the tree. 
For this, we consider the following induction statement:

\begin{quote}
${{P}(h)}:$
\emph{
If $G$ has height $h$, there are two solutions $u^1,u^2\in \UU$ such that
\begin{itemize}
    \item $\costmax(G) \leq 2\sqrt{2}\:\sum_{\{i,j\}\in E[G]} d(u_i^1,u_j^1)= \sum_{\{i,j\}\in E[G]} d(u_i^2,u_j^2)$, 
    \item $u_i^1=u_i^2$ for each vertex $i$ with level $l<h$,
    \item $[u_i^1,u_i^2]$ is a diameter of $\U_i$ for each vertex $i$ with level $h$.
\end{itemize} 
} 
\end{quote}

If $h=0$, $G$ has only one vertex which is the root of the tree, so the induction statement is trivially satisfied with any diameter $[u^1,u^2]$ of $\U_1$. 

Assume that $P(h)$ is true for some $h\geq 1$, and let $G$ be a rooted tree with height $h+1$. 
Without loss of generality, we assume that the vertices of $V[G]$ are sorted by increasing level and let $n_{h-1}$, $n_h$ and $n_{h+1}$ be such that the vertices with level $h$ are $V_h = \{n_{h-1}+1, \dots, n_h\}$ and those with level $h+1$ are $V_{h+1}=\{n_{h}+1, \dots, n_{h+1}\}$.
Let $G_{\leq h}=(V_{\leq h},E_{\leq h})$ be the subtree of $G$ induced by vertices $\{1,\dots,n_h\}$.
Tree $G_{\leq h}$ has height $h$ so we can apply the induction statement to get two solutions $u^1,u^2\in\UU$ as described in ${P(h)}$. 

Now, for all $i\in V_h$, let $S_i\subset E[G]$ be the set of edges of the star graph whose internal vertex is $i$ and whose leaves are the children vertices of $i$ (which all belong to $V_{h+1}$).
Similarly to what was done in the proof of Proposition~\ref{prop:stars}, for all $\{i,j\}\in S_i$, we can set $u_j^1\in\argmax_{u\in\U_j} d(u,u_i^1)$ and $u_j^2\in\argmax_{u\in\U_j} d(u,u_i^2)$, which yields $d(u_i^1, u_j^1)+d(u_i^2,u_j^2)\geq \dmax_{ij}$.
We can then assume without loss of generality that $\sum_{\{i,j\}\in S_i} d(u_i^1, u_j^1)\geq \sum_{\{i,j\}\in S_i} d(u_i^2, u_j^2)$, i.e., \begin{equation}\label{eq:intermed_balltree1}
    \costmax(S_i)\leq 2\sum_{\{i,j\}\in S_i} d(u_i^1, u_j^1).
\end{equation}

Given that we are considering the Euclidean distance with spherical uncertainty sets, the above implies that segment $[u_i^1,u_j^1]$ goes through the center $o_j$ of $\U_j$ (direct application of the triangle inequality).
Then, let $[\bar{u}_j^1, \bar{u}_j^2]$ be the diameter of $\U_j$ that is orthogonal to $[u_i^1,u_j^1]$; it exists because $\U_j$ is spherical.
Denoting the radius of $\U_j$ by $r_j$, we then compute
\begin{equation*}
    \left\{\begin{aligned}
    &d(u_i^1,u_j^1)^2 = d(u_i^1,o_j)^2 + r_j^2 + 2 r_j d(u_i^1,o_j) \leq 2 (d(u_i^1,o_j)^2 + r_j^2)\\
    &d(u_i^1,\bar{u}_j^1)^2 =d(u_i^1,\bar{u}_j^2)^2 =  d(u_i^1,o_j)^2 + r_j^2
    \end{aligned}\right.
\end{equation*}
As a consequence, $d(u_i^1,\bar{u}_j^1)=d(u_i^1,\bar{u}_j^2)\geq 1/\sqrt{2} d(u_i^1,u_j^1)$.
Using~\eqref{eq:intermed_balltree1}, we then get
$$\costmax(S_i)\leq 2\sqrt{2}\sum_{\{i,j\}\in S_i}  d(u_i^1,\bar{u}_j^1) = 2\sqrt{2}\sum_{\{i,j\}\in S_i}  d(u_i^1,\bar{u}_j^2).$$
The same applies to all $i$, so we build two solutions $\tilde{u}^1,\tilde{u}^2\in\UU$ such that 
\begin{itemize}
\item $\tilde{u}_i^1=\tilde{u}_i^2=u_i^1$ for all $i\in V_{\leq h}$,
\item $[\tilde{u}_i^1,\tilde{u}_i^2]=[\bar{u}_i^1,\bar{u}_i^2]$ is a diameter of $\U_i$ for all $i\in V_{h+1}.$ 
\end{itemize}
To conclude, we observe that $E[G]=E_{\leq h}\cup \left(\bigcup_{i\in V_h} S_i \right)$, so 
\begin{align*}\sum_{\{i,j\}\in E[G]} d(\tilde{u}_i^1,\tilde{u}_j^1) &= \sum_{\{i,j\}\in E_{\leq h}} d(\tilde{u}_i^1,\tilde{u}_j^1) + \sum_{i\in V_h}\sum_{\{i,j\}\in S_i} d(\tilde{u}_i^1,\tilde{u}_j^1)\\
    & = \sum_{\{i,j\}\in E_{\leq h}} d(u_i^1,u_j^1) + \sum_{i\in V_h}\sum_{\{i,j\}\in S_i} d(u_i^1,\bar{u}_j^1)\\
    &\geq \frac{1}{2\sqrt{2}} \costmax(G_{\leq h}) +  \frac{1}{2\sqrt{2}}\sum_{i\in V_h}\costmax(S_i)\\
    & = \frac{1}{2\sqrt{2}} \costmax(G).
\end{align*}
\end{proof}

\begin{remark}
 Notice that Proposition~\ref{prop:tree_eucl} does not fit into the framework described so far as the sets $\U_i$ are infinite. Nevertheless, a finite input for such problems can be defined as $I^{ball}=(\GG,\additionalInput,c,r,\ell)$, where $\GG$ and $\additionalInput$ are as before, $c$ and $r$ denote the center and radii of the balls, respectively, and $\ell$ denotes their dimension. Furthermore in such problems, we consider that the distance between any two points in $u,v\in\R^\ell$ can be computed in constant time by calling the function $\|u-v\|_2$.
\end{remark}

%%%%%%%%%%%%%%%%%%%%%%%%%%%%%%%%%%%%%%%%%%%%%%%%%%%%%%%%%%%%%%%%%%%%%%%%%%%%%%%%%%%%%%%%%%%%%%%%%%%%%%%%%%%
\section{FPTAS for robust shortest path}
\label{sec:FPTAS}
%%%%%%%%%%%%%%%%%%%%%%%%%%%%%%%%%%%%%%%%%%%%%%%%%%%%%%%%%%%%%%%%%%%%%%%%%%%%%%%%%%%%%%%%%%%%%%%%%%%%%%%%%%%

Up to now, we have mostly focused on general problems \ROBUST{$\Pi$} and provided constant-factor approximation algorithms. The purpose of this section is to focus on a specific problem, \ROBUST{\DELTASP}, known to be $\NP$-hard~(\cite{BougeretOmerPoss1}).
We do this by providing a dynamic programming algorithm for \ROBUST{\SP} (a generalization of \ROBUST{\DELTASP} where $d$ is not required to verify the triangle inequality) in Section~\ref{section:DP}.
\revision{As observed at the end of Section~\ref{section:DP}, this dynamic porgramming algorithm becomes polynomial when some parameters are constant (typically $|\U|$). Moreover, it is also used in Section~\ref{sec:fptas} to derive the \FPTAS.}

\subsection{Dynamic programming algorithm}
\label{section:DP}
%\JO{It might necessary to formally define the \ROBUST{\SP} problem for this section or at least the proof of the FPTAS.}
%In this section we provide a dynamic programming algorithm for \ROBUST{\SP}. %which can be either used directly to get a polynomial-time algorithm when some parameters of the input are constant, or used as a subroutine in the \FPTAS of Section~\ref{sec:fptas}.

In the \ROBUST{\SP} problem, the input $I = (\GG,s,t,\U,\drob)$ is the same as in the \ROBUST{\DELTASP} problem, except that $\drob$ is now a non-negative matrix that is only assumed to be symmetric, e.g., $\drob(u,v)=\drob(v,u)$ for any $u,v\in\bigcup_{i\in \VV}\U_i$.
%d:\Metric\times\Metric\rightarrow\R$ is a mapping that is only assumed to be symmetric, e.g., $d(u,v)=d(v,u)$ for any $(u,v)\in\Metric^2$. 
In particular, the function associated with $\drob$ may not respect the triangle inequality. This level of generality will be useful in the next section when deriving the \FPTAS for \ROBUST{\DELTASP}.

%\mb{Given such a pair $(\Metric,d)$, we define the \ROBUST{\SP}
%problem where given a graph $G$, $\{s,t\} \subseteq V[G]$
%and $\UU=\times_{i\in V}\U_i$ (where $\U_i \subseteq \Metric$),
%the objective is to find an $s-t$ simple path $P$ minimizing $c(P)$, where $c(P)=\max_{\uu \in \UU}c(\uu,P)$.}

%\MB{Vold}
%We denote the shortest path problem defined on this pair $(\Metric,d)$ as \ROBUST{\SP} (instead of \ROBUST{\DELTASP}). An input of \ROBUST{\SP} is a tuple $I=(\Metric,d,G,\{s,t\},\UU)$ where $(\Metric,d)$ is the aforementioned pair, $G$ is a graph with two distinguished vertices $s$ and $t$, and $\UU=\times_{i\in V}\U_i$, where $\U_i \subseteq \Metric$. As for \ROBUST{\DELTASP}, the objective of \ROBUST{\SP} is to find an $s-t$ simple path $P$ minimizing $c(P)$, where $c(P)=\max_{\uu \in \UU}c(\uu,P)$.

%\JO{About the notations below. I find that the caligraphic P should be given to the path. And maybe use the $Q$ for the set of profiles, with small q for each profile. Also, why use $\kappa$ instead of a plain $k$ for indices?}
Next, we introduce further notations that are used to derive the dynamic programming algorithm. Let $\itpaths{i}$ denote the set of all $i-t$ simple paths in $\GG$, and $\itpathsless{i}{\kappa}$ those having at most $\kappa$ edges. 
Given $i \in \VV$ and a path $P\in\itpaths{i}$, we define the worst-case cost given that $u_i=u_i^\ell$ for $\ell\in[\nU{i}]$ (recall that $\nU{i}=\abs{\UU_i}$) as
\begin{itemize}
  \item $c^\ell(P)=\max \{c(u,P) \mid u \in \UU, u_i = u_i^\ell \}$, and
  \item and $Pr(P) = (c^\ell(P))_{\ell \in [\nU{i}]}$ as the profile of $P$.
\end{itemize}
We also introduce for any $\kappa \in [n]$ the set of profiles of all $i-t$ paths with at most $\kappa$ edges as $\P^{(i,\kappa)} = \{Pr(P) \mid P\in \itpathsless{i}{\kappa}\}$.
%\item $\P^i = \bigcup_{c \in [|V[G]]} \P^{(i,\kappa)}$ as the profiles of all $i-t$ paths
%Observe that for any $i-t$ path $P$, $c(P)=\max_{\ell \in [\nU{i}]}c^{\ell}(P)$.
We denote
\begin{itemize}
\item $\Val(I) = \{c^\ell(P) \mid i \in \VV, P\in\itpaths{i}, \ell \in [\nU{i}]\},$
\item $\nval = |\Val(I)|$, and
\item $\nprof = |\bigcup_{i \in \VV, \kappa \in [n]}\P^{(i,\kappa)}|$ the total number of different profiles. 
\end{itemize}
%\MB{notation $c(p)$ ? pour l'isntant non}

Our objective is to define an algorithm $A(i,\kappa)$ that, given any $i \in \VV$ and $\kappa \in [n]$ computes $(\P^{(i,\kappa)},Q^{(i,\kappa)})$ such that
\begin{itemize}
\item $Q^{(i,\kappa)}\subseteq \itpathsless{i}{\kappa}$,% is a set of $i-t$ paths with at most $\kappa$ edges
\item $|Q^{(i,\kappa)}|=|\P^{(i,\kappa)}|$,
\item for any $p \in \P^{(i,\kappa)}$, there exists $P \in Q^{(i,\kappa)}$ such that $Pr(P)=p$.
\end{itemize}
Informally, $A(i,\kappa)$ computes all profiles associated to $(i,\kappa)$ as well as a representative path for each one of these profiles. Let us first verify that computing this is enough to solve \ROBUST{\SP} optimally.
\begin{lemma}\label{lemma:laststep}
\revision{Given an input $I = (\GG,s,t,\U,\drob)$ of the \ROBUST{\SP} problem, and given $(\P^{(s,n)},Q^{(s,n)})$, we can find an optimal solution of $I$ in time polynomial in $n$ and linear in $\nprof$.}
\end{lemma}
\begin{proof}
For any $p \in \P^{(s,n)}$, $p=(p_\ell)_{\ell \in [\nU{s}]}$, let $x_p = \max_{\ell \in [\nU{s}]} p_\ell$.
We define $$p^{min} =\argmin_{p \in \P^{(s,n)}} x_p$$ and output $P^{min} \in Q^{(s,n)}$ such that $Pr(P^{min})=p^{min}$.
Let $P^*$ be an optimal solution and $p^* = Pr(P^*)$. 
As  $p^* \in \P^{(s,n)}$, we have $c(P^{min})=x_{p^{min}} \le x_{p^*}=c(P^*)$.
\end{proof}

We provide next the dynamic programming recursion for $\P^{(i,\kappa)}$, leaving aside the computation of $Q^{(i,\kappa)}$ to simplify the presentation.
Given $i \in \VV$, $\kappa \in [n]$, $j \in N(i)$, $P'\in\itpathsless{j}{\kappa}$, and $p' = Pr(P')$, we consider the $i-t$ path $P = iP'$ obtained by concatenating $i$ with $P'$. One readily verifies that $Pr(P)=p(i,\kappa,j,p')$, where $p(i,\kappa,j,p')=(y_\ell)_{\ell \in [\nU{i}]}$, with $y_\ell = \max_{\ell' \in [\nU{j}]} \dist{u_i^\ell}{u_j^{\ell'}}+p'_{\ell'}$. We obtain that for any $i\neq s$ and $\kappa>0$
\begin{equation}
\label{eq:recursion}
\P^{(i,\kappa)} = \{p(i,\kappa,j,p') \mid j \in N(i), p' \in \P^{(j,\kappa-1)}\},
\end{equation}
and $\P^{(s,0)}=(0)_{\ell\in[\nU{s}]}$. Recall that $\NU=\max_{i \in \VV}\nU{i}$. We provide in the next lemma the complexity of the resulting dynamic programming algorithm.

\begin{lemma}\label{lemma:recursioncomplexity}
Let $i \in \VV$, $\kappa \in [n]$. Given $\P^{(j,\kappa-1)}$ and $Q^{(j,\kappa-1)}$ for any $j \in N(i)$, we can compute
$(\P^{(i,\kappa)},Q^{(i,\kappa)})$ in time $\grandO(n \NU^2 \nprof)$ and space $\grandO(\nprof n(log(n)))$.
\end{lemma}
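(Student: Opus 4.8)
The plan is to compute the complexity of the recursion given in~\eqref{eq:recursion}, which expresses $\P^{(i,\kappa)}$ in terms of the sets $\P^{(j,\kappa-1)}$ for $j\in N(i)$, while tracking the additional bookkeeping needed to maintain the representative paths $Q^{(i,\kappa)}$. First I would bound the number of candidate profiles generated. For each neighbor $j\in N(i)$ and each profile $p'\in\P^{(j,\kappa-1)}$, the recursion produces exactly one new candidate profile $p(i,\kappa,j,p')$. Since $|N(i)|\leq n$ and $|\P^{(j,\kappa-1)}|\leq \nprof$ by definition of $\nprof$, the total number of candidate profiles to consider is $\grandO(n\,\nprof)$.

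Next I would account for the cost of evaluating each candidate profile. Computing a single profile $p(i,\kappa,j,p')=(y_\ell)_{\ell\in[\nU{i}]}$ requires, for each of the $\nU{i}\leq\NU$ coordinates, taking a maximum over $\ell'\in[\nU{j}]$ of the quantity $\dist{u_i^\ell}{u_j^{\ell'}}+p'_{\ell'}$. Each such coordinate therefore costs $\grandO(\nU{j})=\grandO(\NU)$, so a full profile costs $\grandO(\NU^2)$ arithmetic operations. Multiplying by the $\grandO(n\,\nprof)$ candidates yields the claimed $\grandO(n\,\NU^2\,\nprof)$ running time. The representative-path set $Q^{(i,\kappa)}$ is maintained alongside: when a candidate profile $p(i,\kappa,j,p')$ is generated, I would attach the concatenated path $P=iP'$ (where $P'\in Q^{(j,\kappa-1)}$ realizes $p'$), which does not increase the asymptotic time since storing or copying the path pointer is subsumed.

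The main obstacle, and the step requiring the most care, is the deduplication needed to ensure the third output requirement, namely that $|Q^{(i,\kappa)}|=|\P^{(i,\kappa)}|$ so that each profile is kept exactly once. Several distinct triples $(j,p')$ may generate identical profiles, and the algorithm must detect and discard duplicates so that both the profile set and its representatives remain in bijection with the genuinely distinct profiles. I would argue that this is handled by inserting each newly computed profile into a dictionary (or balanced search structure) keyed by the profile itself; a profile is a vector of length $\nU{i}\leq\NU$ drawn from $\Val(I)$, so the comparison cost is absorbed into the per-profile computation, and the number of distinct stored profiles is at most $\nprof$ by definition. This is also where the space bound arises: storing $\grandO(\nprof)$ profiles, each a vector of length $\grandO(\NU)$ together with a path of length $\grandO(n)$ encoded in $\grandO(\log n)$ bits per vertex, gives the $\grandO(\nprof\,n\log n)$ space bound.

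Finally I would note that the analysis is purely a counting argument over the recursion already established, so no new combinatorial insight is required beyond the careful separation of the three cost contributions (enumeration of candidates, per-candidate profile evaluation, and deduplication) and verifying that the path-bookkeeping for $Q^{(i,\kappa)}$ does not dominate any of them.
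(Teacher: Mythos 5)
Your proposal is correct and follows essentially the same route as the paper: enumerate the $\grandO(n\,\nprof)$ pairs $(j,p')$, spend $\grandO(\NU^2)$ per profile evaluation, deduplicate via a profile-indexed structure while attaching the concatenated representative path, and charge the space to storing $\nprof$ paths of length $\grandO(n)$ at $\grandO(\log n)$ bits per vertex. The only cosmetic difference is that the paper uses arrays indexed by profiles with $\grandO(\NU)$ access rather than a dictionary, which is an equivalent bookkeeping choice.
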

\begin{proof}
 For complexity issues we assume that  $\P^{(j,\kappa-1)}$ and $Q^{(j,\kappa-1)}$ are represented as arrays $A_P$ and $A_Q$ indexed by profiles,
where given a profile $p$, $A_P[p]$ is true iff $p \in \P^{(j,\kappa-1)}$, and $A_Q[p]$ contains a path $P$ such that $Pr(P)=p$ if $p \in \P^{(j,\kappa-1)}$, and $\emptyset$ otherwise.
This explains the $\grandO(\nprof n(log(n)))$ required space for storing $Q$.
As profiles are vectors of length at most $\NU$, we consider that it takes $\grandO(\NU)$ to obtain the value stored at index $p$ of array $A_P$ or $A_Q$.

We compute $\P^{(i,\kappa)}$ following the recursion relation~\eqref{eq:recursion}, and compute $Q^{(i,\kappa)}$ along the way. More precisely, we start by initializing two arrays $A'_P$ and $A'_Q$ of size $\nprof$.
Then, for all $j \in N(i)$ and $p' \in \P^{(j,\kappa-1)}$, we compute $p(i,\kappa,j,p')$ in time $\grandO(\NU^2)$.
Now, we perform the following operations in $\grandO(1)$. 
If $p(i,\kappa,j,p')$ is not already in $\P$, we add it to $\P$, we find a path $P'$ in $Q^{(j,\kappa-1)}$ such that $Pr(P')=p'$, and we add the path $iP'$ to $Q$. 
\end{proof}

We are now ready to state the main result of this section.
\begin{theorem}\label{thm:dp}
\ROBUST{\SP} can be solved in time $\grandO(n^3  \NU^2 \nprof)$ and space $\grandO(\nprof n^3log(n))$.
\end{theorem}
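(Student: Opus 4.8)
The plan is to bound the overall running time by combining the per-step complexity from Lemma~\ref{lemma:recursioncomplexity} with the structure of the dynamic program, and then to extract the optimal solution using Lemma~\ref{lemma:laststep}. The dynamic program computes the pairs $(\P^{(i,\kappa)},Q^{(i,\kappa)})$ for all $i\in V[G]$ and all $\kappa\in[n]$, starting from the base case $\P^{(s,0)}=(0)_{\ell\in[\nU{s}]}$ and filling the table in increasing order of $\kappa$ via the recursion~\eqref{eq:recursion}. Since simple $i-t$ paths have at most $n-1$ edges, it suffices to compute the table up to $\kappa=n$, so I would first argue that the table has $\grandO(n^2)$ entries in total, one for each pair $(i,\kappa)\in V[G]\times[n]$.

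The main body of the argument is a straightforward aggregation. For each fixed pair $(i,\kappa)$, Lemma~\ref{lemma:recursioncomplexity} gives a time bound of $\grandO(n\NU^2\nprof)$ to compute $(\P^{(i,\kappa)},Q^{(i,\kappa)})$ from the previously computed pairs $(\P^{(j,\kappa-1)},Q^{(j,\kappa-1)})$ for $j\in N(i)$. Multiplying this by the $\grandO(n^2)$ table entries yields $\grandO(n^3\NU^2\nprof)$ total time, matching the claimed bound. For the space bound, I would note that a naive approach storing every table entry simultaneously would cost $\grandO(\nprof n\log(n))$ per entry times $\grandO(n^2)$ entries, giving $\grandO(\nprof n^3\log(n))$; this is exactly the stated space bound, so the theorem follows by simple multiplication. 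Finally, once $(\P^{(s,n)},Q^{(s,n)})$ is available, Lemma~\ref{lemma:laststep} recovers an optimal solution in time polynomial in $n$ and linear in $\nprof$, which is dominated by the dynamic programming cost and hence does not affect the overall bounds.

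The only subtlety I would flag concerns the space accounting. Computing $(\P^{(i,\kappa)},Q^{(i,\kappa)})$ requires reading the arrays $A_P$ and $A_Q$ for the neighbors at level $\kappa-1$; if one is careful, only two consecutive levels $\kappa-1$ and $\kappa$ need to be kept in memory at once, which would reduce the space to $\grandO(\nprof n^2\log(n))$. However, since the theorem also promises a representative path for the optimal profile via $Q^{(s,n)}$, and the paths themselves must be stored, I expect the intended accounting keeps the full table to allow reconstruction, yielding the stated $\grandO(\nprof n^3\log(n))$. The main obstacle is therefore not any single hard step but rather ensuring the space bound is stated consistently with what the algorithm actually retains; the time bound is a direct product of the number of subproblems and the per-subproblem cost, so no delicate estimate is needed there.
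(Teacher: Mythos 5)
Your proposal is correct and matches the paper's own proof: both fill the $\grandO(n^2)$-entry memo\"ization table indexed by $(i,\kappa)$ via recursion~\eqref{eq:recursion}, multiply by the per-entry cost $\grandO(n \NU^2 \nprof)$ and space $\grandO(\nprof n \log(n))$ from Lemma~\ref{lemma:recursioncomplexity}, and finish with Lemma~\ref{lemma:laststep}. Your side remark on possibly keeping only two consecutive levels is a reasonable refinement but not needed, as the paper likewise accounts for storing the full table.
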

\begin{proof}
\revision{We compute  $(\P^{(s,n)},Q^{(s,n)})$ using a DP algorithm based on~\eqref{eq:recursion}, and obtain an optimal solution by Lemma~\ref{lemma:laststep}.
The overall complexity is dominated by the DP, whose complexity is in $\grandO(a \times b)$, where $a$ denotes the number of entries of the associated memoization table,
and $b$ is the time-complexity of computing one entry, assuming the other ones are accessible in $\grandO(1)$.
As there is an entry $(i,\kappa)$ for any $i \in \VV$ and $\kappa \in [n]$, we get $a = \grandO(n^2)$.
By Lemma~\ref{lemma:recursioncomplexity}, we get $b=\grandO(n \NU^2 \nprof)$  get the claimed time complexity. The space complexity immediately follows.}
\end{proof}

Let us further elaborate on the value of $\nprof$ that arises in Theorem~\ref{thm:dp}. First of all, we see that $\nprof \le (\nval)^{\NU}$, leading to the observation below, used in the next section to derive the FPTAS. 
\begin{observation}
\label{obs:dp}
\ROBUST{\SP} can be solved in time $\grandO(n^3 \NU^2 (\nval)^{\NU})$
\end{observation}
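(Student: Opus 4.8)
The plan is to derive this time bound directly from Theorem~\ref{thm:dp}, which already establishes that \ROBUST{\SP} can be solved in time $\grandO(n^3 \NU^2 \nprof)$. The only remaining task is to replace the parameter $\nprof$ with an explicit bound in terms of $\nval$ and $\NU$, so that the complexity becomes self-contained.

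First I would recall the definition of a profile. For a vertex $i$ and a path $P\in\itpaths{i}$, the profile is the vector $Pr(P)=(c^\ell(P))_{\ell\in[\nU{i}]}$, indexed by the possible positions $u_i^\ell$ of the endpoint $i$. Its length is $\nU{i}$, which is at most $\NU=\max_{i\in V[G]}\nU{i}$ by definition. Hence every profile can be viewed as a vector of length at most $\NU$.

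Next I would bound the set of values that can appear in any coordinate of a profile. By definition, every entry $c^\ell(P)$ lies in the set $\Val(I)=\{c^\ell(P)\mid i\in V[G],\, P\in\itpaths{i},\, \ell\in[\nU{i}]\}$, whose cardinality is exactly $\nval$. Since each of the at most $\NU$ coordinates of a profile takes a value from a set of size $\nval$, the number of distinct profiles satisfies $\nprof\le(\nval)^{\NU}$. Substituting this bound into the time complexity $\grandO(n^3 \NU^2 \nprof)$ of Theorem~\ref{thm:dp} immediately yields the claimed running time $\grandO(n^3 \NU^2 (\nval)^{\NU})$.

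There is essentially no obstacle here, since the bound $\nprof\le(\nval)^{\NU}$ is a direct counting argument and the substitution is routine; the observation is a straightforward specialization of Theorem~\ref{thm:dp}. The only point requiring mild care is to note that profiles of paths ending at different vertices $i$ have different lengths $\nU{i}$, but each such length is uniformly dominated by $\NU$, so the single bound $(\nval)^{\NU}$ covers all of them simultaneously.
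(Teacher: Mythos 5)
Your proposal is correct and matches the paper's own reasoning exactly: the paper derives the observation by noting $\nprof \le (\nval)^{\NU}$ (each profile being a vector of at most $\NU$ coordinates, each drawn from $\Val(I)$) and substituting into the bound of Theorem~\ref{thm:dp}. Nothing is missing.
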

From a more theoretical viewpoint, notice  that the reduction of \cite{BougeretOmerPoss1}[Proposition~1] proving the hardness of~\ROBUST{\SP} involves a ``large'' set $\U$, so a natural question is whether \ROBUST{\SP} becomes polynomial for ``small'' sets $\U$, either in terms of diameter or number of elements. It so happens that the two questions can be answered positively. Namely, observe that for any value $c^\ell(P)$, we can find some $n_{v,v'} \in [n]$ for each $(v,v')\in\U^2$ such that $c^\ell(P)=\sum_{(v,v') \in \U^2}n_{v,v'}\dist{v}{v'}$. Therefore, $\nval$ can be bounded by $n^{\frac{|\U|(|\U|-1)}{2}}$, and \ROBUST{\SP} can be solved in polynomial time if $|\U|$ is constant. Alternatively, if all distances are integer, meaning $d$ has integer values, then $\nprof\leq n\times \mbox{diam}(\U)$, so \ROBUST{\SP} can be solved in $\grandO(n^3 \NU^2 (n\times\mbox{diam}(\U))^{\NU})$ in that case. 

\subsection{FPTAS}\label{sec:fptas}

We now consider solving \ROBUST{\DELTASP} approximately. More precisely, we provide an algorithm that, given an input $I$ of \ROBUST{\DELTASP} and $\epsilon > 0$, outputs an $(1 + \epsilon)$-approximated solution in time polynomial both in $n$ and $1/\epsilon$.
More precisely, given any $\epsilon > 0$, we want to provide an $(1+\epsilon)$-approximated solution. Let $I=(\GG,s,t,\U,\drob)$ be an input to \ROBUST{\DELTASP} and $A$ be an upper bound to $\OPT(I)$ and $\epsilon' > 0$. We define a matrix $\drob'$ by rounding each element of $\drob$ to the closest value of the form $\epsilon'\ell A$ for some $\ell \in \mathbb{N}$. We obtain an instance $I'=(\GG,s,t,\U,\drob')$ to \ROBUST{\SP}. Having Observation~\ref{obs:dp} in mind, a straightforward application of the DP from the previous section to $I'$ would yield too many values $n_{val}$. Hence, we show in Section~\ref{sec:fptas} how to adapt the DP, and choose $A$ and $\epsilon'$ (depending on $\epsilon$) appropriately to obtain the result below.
\begin{theorem}\label{thm:fptas}
 \ROBUST{\DELTASP} admits an \FPTAS for fixed $\NU$: for any $\epsilon$, we can compute a $(1+\epsilon)$-approximated solution
 in time $\grandO(n^3 \NU^2 (\frac{n^2}{\epsilon})^{\NU})$.
\end{theorem}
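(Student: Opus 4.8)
The goal is to turn the exact dynamic programming algorithm of Theorem~\ref{thm:dp} into an \FPTAS for \ROBUST{\DELTASP}, where the key difficulty is controlling the blowup of $\nprof$. By Observation~\ref{obs:dp}, the running time of the DP is $\grandO(n^3 \NU^2 (\nval)^{\NU})$, so it suffices to obtain a rounded instance $I'$ on which $\nval$ is polynomially bounded in $n$ and $1/\epsilon$, while guaranteeing that an optimal solution of $I'$ is a $(1+\epsilon)$-approximation for the original instance $I$. The plan is to (i) fix a tractable upper bound $A$ on $\OPT(I)$, (ii) round every entry of $\drob$ to a multiple of $\epsilon' A$ for a well-chosen $\epsilon'$, (iii) bound $\nval$ for the rounded instance, and (iv) verify that the rounding distorts the objective by at most a factor $1+\epsilon$.

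First I would choose $A$. A natural candidate is $A = \max_{P} c(P)$ taken over any single feasible $s$--$t$ path, or more simply the cost of an arbitrary $s$--$t$ path computed with the $\dmax$ distances; this is polynomially computable and satisfies $\OPT(I) \le A \le n\cdot\OPT(I)$ (every path has at most $n$ edges and each edge cost is at most $\dmax$, while $\OPT$ must pay at least the largest edge on its chosen path). The factor-$n$ gap between $A$ and $\OPT(I)$ is exactly what forces the choice $\epsilon' \approx \epsilon/n^2$: after rounding each $\drob$ entry to the nearest multiple of $\epsilon' A$, every one of the at most $n$ edges on a path incurs an additive error of at most $\epsilon' A$, so the total additive error on any path is at most $n\epsilon' A \le n^2 \epsilon' \OPT(I)$. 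Setting $\epsilon' = \epsilon/n^2$ makes this at most $\epsilon\,\OPT(I)$, yielding the $(1+\epsilon)$ guarantee. Here I would also argue that $c^\ell$ values, being maxima of sums of rounded distances, inherit the same additive control, so that the optimal rounded path maps back to a near-optimal true path via the bijection between profiles and representative paths in $Q^{(s,n)}$.

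Next I would bound $\nval$ on $I'$. Since every rounded distance is an integer multiple of $\epsilon' A$ lying in $[0, A]$ (values exceeding $A$ are irrelevant as $\OPT(I)\le A$), and each $c^\ell(P)$ is a sum of at most $n$ such edge values, every $c^\ell$ value is a multiple of $\epsilon' A$ in the range $[0, nA]$. Hence the number of distinct values is at most $n/\epsilon' = n^3/\epsilon$, so $\nval \le n^3/\epsilon$; absorbing constants into the stated $\grandO$ this gives $\nval = \grandO(n^2/\epsilon)$ after rescaling $A$ by a factor $n$, which matches the base of the exponent claimed in the theorem. Plugging $\nval = \grandO(n^2/\epsilon)$ into Observation~\ref{obs:dp} yields the running time $\grandO\!\left(n^3 \NU^2 (n^2/\epsilon)^{\NU}\right)$, polynomial in $n$ and $1/\epsilon$ for fixed $\NU$.

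The main obstacle I anticipate is the bookkeeping of the error analysis for the rounded profiles rather than for a single fixed path. Because \ROBUST{\SP} optimizes a worst-case (max over $\UU$) cost, I must ensure the additive rounding error propagates correctly through the $\max$ operations defining $c^\ell(P)$ and through the DP recursion~\eqref{eq:recursion}: the identity $y_\ell = \max_{\ell'} \drob(u_i^\ell, u_j^{\ell'}) + p'_{\ell'}$ must be shown to change by at most one rounding step $\epsilon' A$ per edge, so that after $n$ concatenations the cumulative error is $n\epsilon' A$ uniformly over \emph{all} coordinates of a profile. This requires checking that $\max$ is $1$-Lipschitz with respect to the additive perturbation and that the number of edges, not the number of profiles, governs the error accumulation. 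Once this uniform additive bound over profiles is established, the back-and-forth between $I$ and $I'$ follows, and the complexity claim is a direct substitution into Observation~\ref{obs:dp}.
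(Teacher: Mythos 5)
Your overall strategy --- compute an upper bound $A$ on $\OPT(I)$, round every distance to a multiple of $\epsilon' A$, run the dynamic program of Theorem~\ref{thm:dp} on the rounded instance, and bound $\nval$ --- is the same as the paper's, but two steps have genuine gaps. The first concerns the choice of $A$. The cost (with respect to $\dmax$) of an \emph{arbitrary} $s$--$t$ path is not bounded by $n\cdot\OPT(I)$: the graph may contain a cheap optimal path and an arbitrarily expensive alternative, and your justification (``$\OPT$ must pay at least the largest edge on its chosen path'') only controls the path that $\OPT$ itself uses. You need $A$ to come from an actual approximation algorithm; the paper uses the $2$-approximation obtained from Theorem~\ref{thm:approx} together with Proposition~\ref{prop:delta} (a shortest path for the $\dmax$ distances), so that $\OPT(I)\le A\le 2\OPT(I)$ and $\epsilon'=\epsilon/(2n)$ suffices. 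With your factor-$n$ gap you are forced to take $\epsilon'=\Theta(\epsilon/n^2)$, which yields $\nval=\grandO(n/\epsilon')=\grandO(n^3/\epsilon)$ and hence a running time of $\grandO(n^3\NU^2(n^3/\epsilon)^{\NU})$ rather than the claimed $\grandO(n^3\NU^2(n^2/\epsilon)^{\NU})$; the remark about ``rescaling $A$ by a factor $n$'' does not repair this.

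The second gap is in the bound on $\nval$. The set $\Val(I')$ ranges over \emph{all} $i$--$t$ paths explored by the dynamic program, and nothing in your argument prevents such a path from having cost far above $nA$ (a single edge may already satisfy $\dmax_{ij}\gg A$). Your parenthetical ``values exceeding $A$ are irrelevant as $\OPT(I)\le A$'' points at the right idea but is not an argument: you must either (i) prune, as the paper does, the ``useless'' paths with $c'(P)>A(1+n\epsilon')$ and prove that such a path cannot be the suffix of an optimal solution of $I'$ (which holds because, by non-negativity of $\drob'$, the worst-case cost of a suffix lower-bounds that of the whole path), or (ii) delete beforehand every edge with $\dmax_{ij}>A$, justifying that no optimal path uses such an edge since $c(P)\ge\dmax_{ij}$ for every edge of $P$. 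Either route closes the gap, but one of them must actually be carried out; otherwise the claimed bound $\nval=\grandO(n/\epsilon')$ is unsupported. The remaining ingredients of your plan (the $1$-Lipschitzness of the max in the profile recursion, the additive error of $n\epsilon' A$ per path, and the substitution into Observation~\ref{obs:dp}) are correct and match the paper's proof.
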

\begin{proof}
 Let $I=(\GG,s,t,\UU,\drob)$ be an instance of \ROBUST{\DELTASP}.
 Our objective is to provide a solution of cost at most $(1+\epsilon)\OPT(I)$.
Let $\epsilon' \in \mathbb{R}^+$.
%Let $I=(\Metric,d,G,\UU,s,t)$ be an instance of \ROBUST{\DELTASP}, and $\epsilon' \in \mathbb{R}^+$.
Using the $2$-approximation obtained combining Theorem~\ref{thm:approx} and Proposition~\ref{prop:delta}, we start by computing an $s-t$ path $P_A$ of cost $c(P_A)=A$, where $\OPT(I) \le A \le 2\OPT(I)$.       
For any $x,y \in  \bigcup_{i \in \VV}\UU_i$, we define $\drob'(x,y)$ by rounding up $\drob(x,y)$ to the closest value of the form $\epsilon'\ell A$ for some $\ell \in \mathbb{N}$.
For any path $P$ and $\uu \in \UU$ we denote by  $c'(\uu,P)=\sum_{{i,j} \in P}\drob'(u_i,u_j)$,  $c'(P)=\max_{\uu \in \UU}c'(\uu,P)$.
Let $I'=(\GG,s,t,\UU,\drob')$ be the instance of \ROBUST{\SP} obtained when using $\drob'$ instead of $\drob$.
%and $I'$=(\Metric,d',G,\UU,s,t)$.

Observe that
\begin{itemize}
\item the function $d'(u,v)=\drob'(u,v)$ may not be a distance, % (even if $\drob$ is)
\item for any $x,y \in \bigcup_{i \in \VV} \UU_i$, we have $\drob(u,v) \le \drob'(u,v) \le \drob(u,v)+\epsilon' A$
\item for any path $P$, $c(P) \le c'(P) \le c(P)+n \epsilon' A$
\item $\OPT' \le \OPT+n \epsilon' A$.
\end{itemize}

Let $i \in \VV$ and $P$ be an $i-t$ path. We say that $P$ is \emph{useless} if it verifies $c'(P) > A(1+n \epsilon')$; otherwise, $P$ is said to be \emph{good}. According to previous observations, we see that $c'(P_A) \le A(1+n \epsilon')$.
Thus, for any $i \in \VV$ and useless $i-t$ path $P$, we have $c'(P) > c'(P_A)$. This implies that $P$ cannot be the suffix of an optimal solution to input $I'$ (meaning that there is no optimal solution of $I'$ that first goes from $s$ to $i$, and then uses $P$).
As a consequence, in the DP algorithm provided in Section~\ref{section:DP}, we can restrict our attention to the profiles of good paths, without loosing optimality in $I'$.
More formally, for any $i \in \VV$ and $\kappa \in [n]$, we adapt the previous definition $\P^{(i,\kappa)}$ to
$$
\P_{\mathrm{good}}^{(i,\kappa)}= \{Pr(P) \mid P\in \itpathsless{i}{\kappa}, P\mbox{ is good}\},
$$
and we now consider that the DP algorithm
$A_g(i,\kappa)$ computes $(\P_{\mathrm{good}}^{(i,\kappa)},Q^{(i,\kappa)})$ instead of $(\P^{(i,\kappa)},Q^{(i,\kappa)})$.
We now compute $P^*$ an optimal solution on instance $I'$ (for cost function $c'$) using Theorem~\ref{thm:dp} (with DP algorithm $A_g$), and output $P^*$.

We have $c(P^*) \le c'(P^*) = \OPT(I') \le \OPT(I)+n \epsilon' A \le \OPT(I)(1+2n\epsilon')$.
Let us now consider the complexity of computing $P^*$.
As $c'(P)=\max_{\ell \in [\nU{i}]}c^{\prime \ell}(P)$, observe that for any good path $P$ we have $c^{\prime \ell}(P) \le A(1+n\epsilon')$.
Moreover, as for any $x,y \in \bigcup_{i \in \VV}\UU_i$, $\drob'(u,v)$ is a multiple of $\epsilon' A$, we get that for any good path $P$,
$c^{\prime \ell}(P) = \ell \epsilon' A$ for $0 \le l \le n+\lceil \frac{1}{\epsilon'} \rceil$.
This implies that $\nval \le (n+2+\frac{1}{\epsilon'}) = \grandO(\frac{n}{\epsilon'})$, so Observation~\ref{obs:dp} leads to the desired complexity of $\grandO(n^3 \NU^2 (\frac{n}{\epsilon'})^{\NU})$
to get a ratio $1+2n\epsilon'$. Finally, given a target ratio $1+\epsilon$, we set $\epsilon' = \frac{\epsilon}{2n}$, 
obtaining the claimed complexity.
\end{proof}

\revision{
\section{Conclusion}
In this paper, we proved that for a certain type of problem $\Pi \in \PBFAMILY$, we can transfer the approximability of $\Pi$ to its robust variant \ROBUST{$\Pi$} (to the price of a constant multiplicative factor),
and we provided an \FPTAS for the robust shortest path problem. Natural questions for future research would be to improve some of the multiplicative constants we obtained for specific structures and to generalize the \FPTAS to a larger class of problems.
}

\nocite{*}
\bibliographystyle{abbrvnat}
% use the following instead if you encounter problems 
%\bibliographystyle{alpha}
\bibliography{locational_uncertainty}
\label{sec:biblio}

%%%%%%%%%%%%%%%%%
\end{document}